\newcommand{\vn}{{\bf v}}
\newcommand{\lamn}{{\mbox{\boldmath $\lambda$}}}
\newcommand{\etn}{{\mbox{\boldmath $\eta$}}}
\title{\huge {An extended Rayleigh model: Properties,
regression and COVID-19 application}}
\author{{ Gauss M. Cordeiro} \\
Universidade Federal de Pernambuco\\
\url{gausscordeiro@gmail.com} \\[0.2cm]
{ Gabriela M. Rodrigues} \\
Universidade de S\~ao Paulo \\ 
\url{gabrielar@usp.br} \\[0.2cm]
{ Edwin M. M. Ortega} \\
 Universidade de S\~ao Paulo \\
\url{edwin@usp.br} \\[0.2cm]
{ Lu\'is H. de Santana} \\
Universidade Federal de Pernambuco \\ 
\url{desantanalh@gmail.com}\\[0.2cm]
{ Roberto Vila} \\
Universidade de Bras\'ilia \\ 
\url{rovig161@gmail.com}}
\date{}
\theoremstyle{thmstyleone}%
\newtheorem{theorem}{Theorem}
\newtheorem{proposition}[theorem]{Proposition}%
\theoremstyle{thmstyletwo}%
\theoremstyle{thmstylethree}%
\begin{document}
\maketitle
\begin{abstract}
We define a four-parameter extended Rayleigh distribution,
and obtain several mathematical properties including a stochastic representation. We construct a regression from the new distribution. The estimation is done by maximum likelihood. The utility of the new models is proved in two real applications. \\
\\
\vspace{1mm} \noindent {\bf Keywords}: Censored data,
generalized Rayleigh; maximum likelihood estimation; regression model, stochastic representation.
\end{abstract}

\section{Introduction}

The Rayleigh distribution has been
employed in many areas (Johnson {\it et al.}, 1994).
Several of its extensions have been published so far.

The cumulative distribution function (cdf) of the
generalized Rayleigh (GR) distribution (Vod\u{a}, 1976) is
\begin{eqnarray}\label{cdfGR}
G_{\text{GR}}(x;\delta,\theta)=\gamma_1(\delta+1,\theta x^2),\quad x>0,
\end{eqnarray}
where $\delta>-1$ and $\theta>0$,
$\Gamma(p)=\int_{0}^{\infty}w^{p-1}\,{\rm{e}}^{-w}dw$
and
$\gamma_1(p,x)=\Gamma(p)^{-1}\int_{0}^{x}w^{p-1}\,
{\rm{e}}^{-w}d w$ are the gamma function and lower
incomplete gamma function ratio, respectively.

The GR distribution includes well-known sub-models.
The classical Rayleigh
distribution follows when $\delta=0$ and $\theta=\lambda^{-2}$.
If $\delta=2^{-1}$ and $\theta=(2\lambda^2)^{-1}$, it gives
the Maxwell distribution. The chi-square refers to $\theta=(2\tau^{2})^{-1}$, $\tau>0$, and
$\delta=\nicefrac{n}{2}-1$, $n\in \mathbb{N}$,
and the half-normal to $\delta=-2^{-1}$ and $\theta={2\sigma}^{-2}$.

The probability density function (pdf)
corresponding to (\ref{cdfGR}) is
\begin{eqnarray}\label{pdfGR}
g_{\text{GR}}(x;\delta,\theta)
=\frac{2 \theta^{\delta+1}}{\Gamma(\delta+1)}\,x^{2 \delta+1}\,{\rm{e}}^{- \theta x^2}.
\end{eqnarray}
Let $Z\sim \text{GR}(\delta,\theta)$ be a random variable with density (\ref{pdfGR}). The moments of $Z$ are easily obtained
from the integral given in Section 3.478 of Gradshteyn and Ryzhik (2000) $\int_{0}^\infty x^{\nu-1} {\rm{e}}^{-\mu x^p} dx =
\nicefrac{\Gamma\left(\nicefrac{\nu}{p}\right)}{p\,\mu^{\nicefrac{\nu}{p}}}$, where $p,\nu,\mu>0$.
Indeed, the $s$th ordinary moment of $Z$ (for a positive real number $s$) is
\begin{equation}\label{momentGR}
\mu_{s}^{\prime}(\delta, \theta)
=\frac{\Gamma(\nicefrac{s}{2}+\delta+1)}{\theta^{\nicefrac{s}{2}}\,\Gamma(\delta+1)}.
\end{equation}

The cdf of the {\it generalized odd log-logistic-G}
(GOLL-G) family follows from Cordeiro {\it et al.} (2017)
\begin{eqnarray}\label{cdfF}
F(x;\alpha,\beta,\boldsymbol{\xi})=\frac{G(x;\boldsymbol{\xi})^{\alpha\beta}}
{G(x;\boldsymbol{\xi})^{\alpha\beta}+\left[1-G(x;\boldsymbol{\xi})^{\beta}\right]^{\alpha}},
\end{eqnarray}
where $\alpha>0$ and $\beta>0$ are two extra parameters.

The generalized log-logistic (Gleaton and Lynch, 2006)
and proportional reversed hazard rate (Gupta and Gupta, 2007)
are special models of Equation~\eqref{cdfF}. Further, the
parent comes when $\alpha=\beta=1$. Prataviera {\it et al.} (2018) discussed the generalized odd log-logistic flexible Weibull distribution.

If $g(x;\boldsymbol{\xi})$ is the parent density,
the pdf corresponding to~\eqref{cdfF} can be written as
\begin{eqnarray}\label{pdfF}
f(x;\alpha,\beta,\boldsymbol{\xi})
= \frac{\alpha\beta g(x;\boldsymbol{\xi}) \,  G(x;\boldsymbol{\xi})^{\alpha\beta-1}[1-G(x;\boldsymbol{\xi})^{\beta}]^{\alpha-1}}{\big\{G(x;\boldsymbol{\xi})^{\alpha\beta}
+\left[1-G(x;\boldsymbol{\xi})^{\beta}\right] ^{\alpha}\big\}^{2}}\, .
\end{eqnarray}

The paper is structured as follows. Section \ref{sec:gollgr} defines the {\it generalized odd log-logistic generalized Rayleigh} (GOLLGR) distribution, addresses some asymptotes
and quantile function (qf),  gives a linear
representation for the family density, reports
moments and generating function, and addresses maximum likelihood estimation. A new regression is constructed in Section \ref{sec:regression}, and some simulation studies are
carried out as well as residual analysis.
Two lifetime data sets in Section \ref{sec:applicatons}
show the utility of the new models. Some conclusions are
reported in Section \ref{sec:conclusoes}.

\section{The GOLLGR model, properties and estimation}\label{sec:gollgr}

The GOLLGR density is defined (for $x >0$) by inserting (\ref{cdfGR}) and (\ref{pdfGR}) in Equation (\ref{pdfF})
\begin{eqnarray}\label{newpdf}
f(x)=
f(x;\alpha,\beta,\delta,\theta)
=
\frac{
2 \alpha\beta \theta^{\delta+1}\,x^{2 \delta+1}\,{\rm{e}}^{- \theta x^2}\,\gamma_1\left(\delta+1,\theta x^2\right)^{\alpha\beta-1}\big[1-\gamma_1\left(\delta+1,\theta x^2\right)^{\beta}\big]^{\alpha-1}
}
{
\Gamma(\delta+1)\big\{\gamma_1\left(\delta+1,\theta x^2\right)^{\alpha\beta}
+\big[1-\gamma_1\left(\delta+1,\theta x^2\right)^{\beta}\big] ^{\alpha}\big\}^{2}
},
\end{eqnarray}
and its hazard rate function (hrf) is
\begin{eqnarray}\label{newhrf}
h(x)
=
\frac{2 \alpha\beta \theta^{\delta+1}
\,x^{2 \delta+1}\,{\rm{e}}^{- \theta x^2}
\,\gamma_1\left(\delta+1,\theta x^2\right)^{\alpha\beta-1}
}{\Gamma(\delta+1)
\,\big[1-\gamma_1\left(\delta+1,
\theta x^2\right)^{\beta}\big] \big\{\gamma_1\left(\delta+1,
\theta x^2\right)^{\alpha\beta}
+\big[1-\gamma_1\left(\delta+1,\theta x^2\right)^{\beta}
\big]^{\alpha}\big\}}\, \cdot
\end{eqnarray}

We have $\lim_{x\to \infty}f(x)=0$, and
\begin{eqnarray}\label{limit-f}
\lim_{x\to 0^+} f(x)
=
\begin{cases}
{2\sqrt{\theta}\over\sqrt{\pi}}, & \alpha\beta-1= 0, \ 2 \delta+1= 0,
\\[0,3cm]
0, & \alpha\beta-1> 0, \ 2 \delta+1> 0,
\\[0,3cm]
\infty, & \alpha\beta-1> 0, \ 2 \delta+1< 0, \ 0<\delta+1<{2\delta+1\over 2(1-\alpha\beta)},
\\[0,3cm]
{2\alpha\beta\sqrt{\theta}\over\Gamma(\delta+1)}\,
\Bigl[{2(1-\alpha\beta) \over \Gamma(\delta+1)(2\delta+1) }\Bigr]^{\alpha\beta-1}, & \alpha\beta-1> 0, \ 2 \delta+1< 0, \ \delta+1={2\delta+1\over 2(1-\alpha\beta)},
\\[0,3cm]
0, & \alpha\beta-1> 0, \ 2 \delta+1< 0, \ \delta+1>{2\delta+1\over 2(1-\alpha\beta)},
\\[0,3cm]
\infty, & \alpha\beta-1< 0, \ 2 \delta+1> 0, \ \delta+1>{2\delta+1\over 2(1-\alpha\beta)},
\\[0,3cm]
{2\alpha\beta\sqrt{\theta}\over\Gamma(\delta+1)}\,
\Bigl[{2(1-\alpha\beta) \over \Gamma(\delta+1)(2\delta+1) }\Bigr]^{\alpha\beta-1}, & \alpha\beta-1< 0, \ 2 \delta+1> 0, \ \delta+1={2\delta+1\over 2(1-\alpha\beta)},
\\[0,3cm]
0, & \alpha\beta-1< 0, \ 2 \delta+1> 0,
\ 0<\delta+1<{2\delta+1\over 2(1-\alpha\beta)},
\\[0,3cm]
\infty, & \alpha\beta-1\leqslant 0, \ 2 \delta+1\leqslant 0, \ \alpha\beta-1\neq 0  \ \text{or} \ 2 \delta+1\neq 0.
\end{cases}
\end{eqnarray}

Further, $\lim_{x\to \infty} h(x)=\infty$ and $\lim_{x\to 0^+} h(x)=\lim_{x\to 0^+} f(x)$.

Hereafter, we omit the arguments of the functions, and let
$X \sim \mbox{GOLLGR}(\alpha,\beta,\delta,\theta)$
be a random variable with pdf \eqref{newpdf},
The odd log-logistic GR follows when $\beta=1$, and the proportional reversed hazard rate GR when $\alpha=1$.
Plots of the pdf and hrf of $X$ are displayed in Figures\ref{gollgrpdf} and \ref{gollgrfrf}, respectively.
They reveal the bimodality and asymmetry of the pdf, and
different shapes of the hrf.

\begin{figure}[h]
\centering		
\includegraphics[width=17cm]{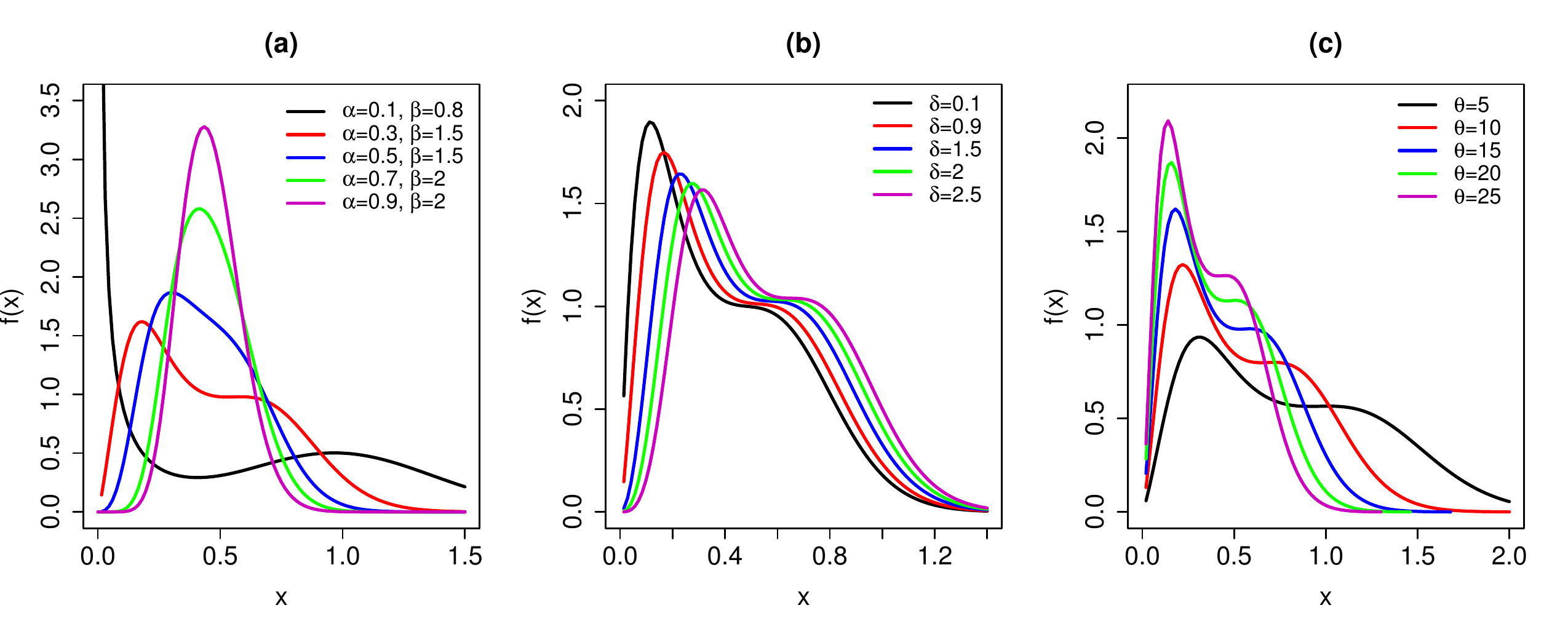}
	\caption{The GOLLGR pdf. (a) $\delta=1.5$ and $\theta=15$. (b) $\alpha=0.3$, $\beta=2$ and $\theta=15$. (c) $\alpha=0.3$, $\beta=1.5$ and $\delta=1.5$.}
	\label{gollgrpdf}

\end{figure}

	\begin{figure}[h]
\centering		\includegraphics[width=17cm]
		{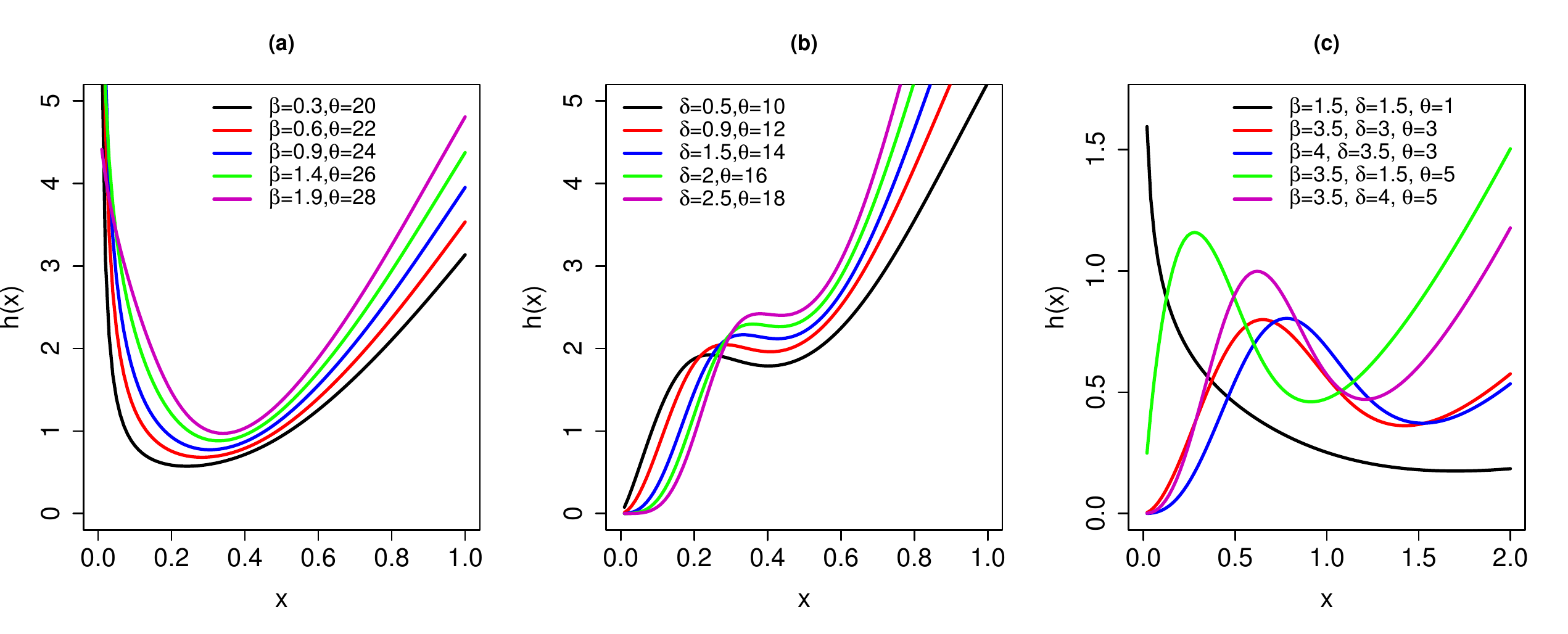}
	\caption{The GOLLGR hrf. (a) $\alpha=0.1$ and $\delta=1.5$. (b) $\alpha=0.3$ and $\beta=2.5$. (c) $\alpha=0.1$.}
	\label{gollgrfrf}

\end{figure}

\subsection{Asymptotes and quantile function}\label{sec:asymptotes}

The asymptotes below follow from Equations (\ref{cdfF}) and (\ref{pdfF})
$$
 F(x)
 \sim G(x)^{\alpha \beta},
 \quad
 f(x)\, , \, h(x)
 \sim  \alpha\beta g(x) G(x)^{\alpha \beta -1}
 \quad \text{as} \quad
 x\to 0{\color{red}^+},
 $$
 and
 $$
f(x)
\sim
\alpha\beta \, g(x) \,  \big[1-G(x)^{\beta}\big]^{\alpha-1},
\quad
h(x)
\sim
\frac{ \alpha\beta g(x)}{\left[1-G(x)^{\beta}\right]\,}
\quad \text{as} \quad
x\to \infty.
$$
For the GOLLGR distribution, we obtain
\begin{align*}
F(x)&\sim \gamma_1(\delta+1, \theta x^2)^{\alpha \beta} \, , \\
    f(x), h(x)&\sim \frac{2 \alpha\beta \theta^{\delta+1}}{\Gamma(\delta+1)}
    \,x^{2 \delta+1}\,{\rm{e}}^{- \theta x^2}\,\gamma_1(\delta+1,\theta x^2)^{\alpha\beta-1}
    \quad \text{as} \quad
    x\to 0{\color{red}^+},
\end{align*}
and
\begin{align*}
f(x)&\sim
\frac{2 \alpha\beta \theta^{\delta+1}\,x^{2 \delta+1}\,{\rm{e}}^{- \theta x^2}\big[1-\gamma_1\left(\delta+1,\theta x^2\right)^{\beta}\big]^{\alpha-1}}{\Gamma(\delta+1)},   \\
h(x)&\sim
\frac{2 \alpha\beta \theta^{\delta+1}
\,x^{2 \delta+1}\,{\rm{e}}^{- \theta x^2}}{\Gamma(\delta+1)\,\big[1-\gamma_1\left(\delta+1,
\theta x^2\right)^{\beta}\big]}
\quad \text{as} \quad
x\to \infty.
\end{align*}

By combining the inverse functions of (\ref{cdfGR}) and  (\ref{cdfF}), the qf of $X$ can be expressed as
\begin{equation}\label{quantile}
x=Q(u)
=
Q_{\text{GR}}
\left(
\left[\frac{(\frac{u}{1-u})^{\frac{1}{\alpha}}}{1+(\frac{u}{1-u})^{\frac{1}{\alpha}}}\right]^{{1}/{\beta}}\,
\right),
\end{equation}
where the GR qf $Q_{\text{GR}}(z)=G_{\text{GR}}^{-1}(z;\delta,\theta)$ comes as
\begin{align}\label{Q-inverse}
Q_{\text{GR}}(z)=\big[\theta^{-1}\,\gamma^{-1}(\delta+1;z)\big]^{1/2}.
\end{align}
Here, $\gamma^{-1}\left(\delta+1;z\right)$ is the gamma
qf with shape $\delta+1$ and unity scale. So, the GOLLGR variates follow easily from \eqref{quantile}.

\subsection{Stochastic representation}\label{Stochastic representation}

The pdf of a log-logistic random variable $Y \sim \mbox{LL}(\nu,\alpha)$ is
\begin{align}\label{pdf-ll}
f_Y(y;\nu,\alpha)=
{(\alpha/\nu)(y/\nu)^{\alpha-1}\over \big[1+(y/\nu)^{\alpha}\big]^2},
\end{align}
where $\nu >0$ and $\alpha> 0$ are scale
and shape, respectively.

If $X\sim \mbox{GOLLGR}(\alpha,\beta,\delta,\theta)$ and $Y \sim \mbox{LL}(1,\alpha)$, we can write from \eqref{cdfF} and \eqref{Q-inverse}
\begin{align}
F(x;\alpha,\beta,\delta,\theta)
&=
\mathbbm{P}\biggl(
Y
\leqslant
\frac{G_{\text{GR}}(x;\delta,\theta)^{\beta}}
{1-G_{\text{GR}}(x;\delta,\theta)^{\beta}}
\biggr)\nonumber
\\[0,2cm]
&=
\mathbbm{P}\Biggl(
\biggl({Y\over 1 + Y}\biggr)^{1/\beta}
\leqslant
G_{\text{GR}}(x;\delta,\theta)
\Biggr)
=
\mathbbm{P}\Biggl(
\Biggl[\theta^{-1}\,\gamma^{-1}\Biggl(\delta+1;\biggl(
{Y\over 1 + Y}
\biggr)^{1/\beta}\,\Biggr)\Biggr]^{1/2}
\leqslant
x
\Biggr), \label{cdf-Y}
\end{align}
where
$G_{\text{GR}}(x;\delta,\theta)=\gamma_1(\delta+1,\theta x^2)$, $x>0$.
So, we have proved the following:
\begin{proposition}
The {\rm GOLLGR} random variable $X$ admits the stochastic representation:
\begin{align*}
X=\Biggl[\theta^{-1}\,\gamma^{-1}\Biggl(\delta+1;\biggl(
{Y\over 1 + Y}
\biggr)^{1/\beta}\,\Biggr)\Biggr]^{1/2} \quad \text{for} \quad Y \sim \mbox{LL}(1,\alpha).
\end{align*}
\end{proposition}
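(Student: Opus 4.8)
The plan is to verify the claimed stochastic representation directly from the definition of the GOLLGR cdf, essentially by reading the computation in~\eqref{cdf-Y} backwards as a chain of equivalences. The key observation is that the cdf in~\eqref{cdfF}, specialized to the GR parent, can be rewritten in a form that identifies the GOLL transformation with the cdf of $Y/(1+Y)$ for $Y\sim\mathrm{LL}(1,\alpha)$.

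First I would record the cdf of $Y\sim\mathrm{LL}(1,\alpha)$: integrating~\eqref{pdf-ll} with $\nu=1$ gives $F_Y(y)=y^{\alpha}/(1+y^{\alpha})$ for $y>0$, equivalently $\mathbbm{P}(Y\le y)=\big(1+y^{-\alpha}\big)^{-1}$. Next I would start from $F(x;\alpha,\beta,\delta,\theta)$ as given by~\eqref{cdfF} with $G=G_{\mathrm{GR}}$, and show algebraically that
\begin{align*}
\frac{G_{\mathrm{GR}}^{\alpha\beta}}{G_{\mathrm{GR}}^{\alpha\beta}+\big[1-G_{\mathrm{GR}}^{\beta}\big]^{\alpha}}
=
F_Y\!\left(\frac{G_{\mathrm{GR}}^{\beta}}{1-G_{\mathrm{GR}}^{\beta}}\right),
\end{align*}
which is the first equality in~\eqref{cdf-Y}; this is a one-line manipulation dividing numerator and denominator by $\big[1-G_{\mathrm{GR}}^{\beta}\big]^{\alpha}$. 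Then I would translate the event $\{Y\le G_{\mathrm{GR}}^{\beta}/(1-G_{\mathrm{GR}}^{\beta})\}$ into $\{(Y/(1+Y))^{1/\beta}\le G_{\mathrm{GR}}(x;\delta,\theta)\}$ using that $t\mapsto t/(1+t)$ is a strictly increasing bijection from $(0,\infty)$ onto $(0,1)$ (so inequalities are preserved), and that $z\mapsto z^{1/\beta}$ is strictly increasing; here one also uses $0<G_{\mathrm{GR}}(x;\delta,\theta)<1$ for $x>0$. Finally, I would apply the GR quantile function: since $G_{\mathrm{GR}}(\cdot;\delta,\theta)$ is a continuous strictly increasing cdf on $(0,\infty)$ with inverse $Q_{\mathrm{GR}}$ from~\eqref{Q-inverse}, the event $\{(Y/(1+Y))^{1/\beta}\le G_{\mathrm{GR}}(x)\}$ equals $\{Q_{\mathrm{GR}}((Y/(1+Y))^{1/\beta})\le x\}$, and substituting the explicit formula $Q_{\mathrm{GR}}(z)=[\theta^{-1}\gamma^{-1}(\delta+1;z)]^{1/2}$ yields exactly the last line of~\eqref{cdf-Y}. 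Since this holds for all $x>0$, the right-hand side is a random variable whose cdf coincides with $F$, proving the representation.

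The only genuinely delicate point is the monotonicity/bijectivity bookkeeping that lets each inequality be inverted without changing the probability: one must check that all the maps involved ($t\mapsto t/(1+t)$, $z\mapsto z^{1/\beta}$, $z\mapsto \gamma^{-1}(\delta+1;z)$, and $G_{\mathrm{GR}}$ itself) are strictly increasing on the relevant ranges, and that the composite argument $(Y/(1+Y))^{1/\beta}$ almost surely lands in $(0,1)$, the domain on which $\gamma^{-1}(\delta+1;\cdot)$ and hence $Q_{\mathrm{GR}}$ are defined. All of this is routine given that $\delta>-1$ and $\theta>0$. In fact the derivation is already carried out line-by-line in display~\eqref{cdf-Y} preceding the statement, so the ``proof'' amounts to asserting that each step there is reversible; I would simply state that the chain of equalities in~\eqref{cdf-Y} establishes $F(x;\alpha,\beta,\delta,\theta)=\mathbbm{P}(X^\star\le x)$ for $X^\star:=[\theta^{-1}\gamma^{-1}(\delta+1;(Y/(1+Y))^{1/\beta})]^{1/2}$, whence $X\stackrel{d}{=}X^\star$.
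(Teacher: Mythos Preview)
Your proposal is correct and follows precisely the paper's own argument: the proof in the paper consists exactly of the chain of equalities displayed in~\eqref{cdf-Y}, after which the authors simply declare the proposition proved. Your write-up merely makes explicit the monotonicity and bijectivity checks that justify each equality in~\eqref{cdf-Y}, so it is a faithful (slightly more detailed) rendering of the same approach.
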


\subsection{Critical points and modality}\label{Equation of critical points}
For brevity, we define
\begin{align}\label{def-T}
	T(x)
	=
	T(x;\beta,\delta,\theta)
	=
	\frac{G_{\text{GR}}(x;\delta,\theta)^{\beta}}
	{1-G_{\text{GR}}(x;\delta,\theta)^{\beta}}.
\end{align}

Since $T(x/k;\beta,\delta,\theta)=T(x;\beta,\delta,\theta/k^2)$, $k>0$, the next result follows from \eqref{cdf-Y}:
\begin{proposition}
If $X \sim {\rm GOLLGR}(\alpha,\beta,\delta,\theta)$, then $k\,X \sim {\rm GOLLGR}(\alpha,\beta,\delta,\theta/k^2)$.
\end{proposition}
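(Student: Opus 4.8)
The plan is to reduce the scaling claim directly to the stochastic representation in \eqref{cdf-Y} together with the elementary scaling identity for $T$ that the paper has already recorded, namely $T(x/k;\beta,\delta,\theta)=T(x;\beta,\delta,\theta/k^2)$ for $k>0$. First I would observe that from \eqref{cdf-Y} the cdf of $X\sim{\rm GOLLGR}(\alpha,\beta,\delta,\theta)$ can be written, with $Y\sim{\rm LL}(1,\alpha)$, as $F(x;\alpha,\beta,\delta,\theta)=\mathbbm{P}(Y\leqslant T(x;\beta,\delta,\theta))$, since $T(x;\beta,\delta,\theta)=G_{\text{GR}}(x;\delta,\theta)^{\beta}/[1-G_{\text{GR}}(x;\delta,\theta)^{\beta}]$ is exactly the argument appearing in the first line of \eqref{cdf-Y}.

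Next I would compute the cdf of $kX$ for $k>0$. For $x>0$ we have
\begin{align*}
\mathbbm{P}(kX\leqslant x)
=\mathbbm{P}\bigl(X\leqslant x/k\bigr)
=\mathbbm{P}\bigl(Y\leqslant T(x/k;\beta,\delta,\theta)\bigr)
=\mathbbm{P}\bigl(Y\leqslant T(x;\beta,\delta,\theta/k^2)\bigr),
\end{align*}
where the second equality uses the representation just recalled and the third uses the stated scaling identity $T(x/k;\beta,\delta,\theta)=T(x;\beta,\delta,\theta/k^2)$. Invoking \eqref{cdf-Y} once more, now with parameter $\theta/k^2$ in place of $\theta$, the right-hand side equals $F(x;\alpha,\beta,\delta,\theta/k^2)$, the cdf of a ${\rm GOLLGR}(\alpha,\beta,\delta,\theta/k^2)$ variable. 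Since two random variables with the same cdf are equal in distribution, this gives $kX\sim{\rm GOLLGR}(\alpha,\beta,\delta,\theta/k^2)$, as claimed.

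There is essentially no obstacle here: the only thing to be a little careful about is the verification of the identity $T(x/k;\beta,\delta,\theta)=T(x;\beta,\delta,\theta/k^2)$, which the statement takes as given but which one should at least note follows because $G_{\text{GR}}(x/k;\delta,\theta)=\gamma_1(\delta+1,\theta(x/k)^2)=\gamma_1(\delta+1,(\theta/k^2)x^2)=G_{\text{GR}}(x;\delta,\theta/k^2)$, and $T$ depends on $x$ and the parameters only through $G_{\text{GR}}$. One should also confirm that the parameters $\alpha,\beta,\delta$ are genuinely untouched by the argument (they are, since $k$ enters only through the scale $\theta$), and that $\theta/k^2>0$ remains a legitimate parameter value, which it does for every $k>0$. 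Thus the proof is a short chain of equalities, and the substantive content is entirely carried by the scaling property of $T$ and the stochastic representation established in the previous subsection.
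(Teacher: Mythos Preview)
Your proof is correct and follows essentially the same approach as the paper: both derive the result from the scaling identity $T(x/k;\beta,\delta,\theta)=T(x;\beta,\delta,\theta/k^2)$ combined with the stochastic representation \eqref{cdf-Y}. You have simply spelled out in detail the chain of equalities that the paper leaves implicit in its one-line justification.
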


By differentiating \eqref{cdf-Y} with respect to $x$, we obtain
\begin{align}\label{def-f-Y}
f(x;\alpha,\beta,\delta,\theta)=f_Y(T(x);1,\alpha)T'(x), \quad Y \sim \mbox{LL}(1,\alpha).	
\end{align}
Then, the derivative of $f(x;\alpha,\beta,\delta,\theta)$ is
\begin{align}\label{idet-1}
f'(x;\alpha,\beta,\delta,\theta)
=
f'_Y(T(x);1,\alpha)[T'(x)]^2+f_Y(T(x);1,\alpha)T''(x).
\end{align}
Since
\begin{align*}
f'_Y(t;1,\alpha)
=
-f_Y(t;1,\alpha)r[t] \quad
\text{with} \
r[t]={t^\alpha+\alpha(t^\alpha-1)+1\over t(t^\alpha+1)},
\end{align*}
Equation \eqref{idet-1} can be expressed as
\begin{align*}
f'(x;\alpha,\beta,\delta,\theta)
=
f_Y(T(x);1,\alpha)\big\{T''(x)-r[T(x)][T'(x)]^2\big\},
\end{align*}
where
\begin{align*}
T'(x)
&=
{\beta g_{\text{GR}}(x;\delta,\theta) T(x)\over G_{\text{GR}}(x;\delta,\theta)[1-G_{\text{GR}}(x;\delta,\theta)^\beta]},
\\[0,2cm]
	T''(x)
	&=
	T'(x)
	\biggl\{
	{g'_{\text{GR}}(x;\delta,\theta)\over g_{\text{GR}}(x;\delta,\theta)}
	+
	g_{\text{GR}}(x;\delta,\theta)
	{(\beta+1)G_{\text{GR}}(x;\delta,\theta)^\beta+\beta-1\over G_{\text{GR}}(x;\delta,\theta) [1-G_{\text{GR}}(x;\delta,\theta)^\beta]}
	\biggr\},
\end{align*}
and
\begin{align}\label{def-der-g}
{g'_{\text{GR}}(x;\delta,\theta)\over g_{\text{GR}}(x;\delta,\theta)}
=
{2\delta+2\theta x^2-1\over  x}.
\end{align}
Then,
\begin{eqnarray*}
	&f'(x;\alpha,\beta,\delta,\theta)=
	\\[0.1cm]
	&f_Y(T(x);1,\alpha)T'(x)
	\biggl\{
	{g'_{\text{GR}}(x;\delta,\theta)\over g_{\text{GR}}(x;\delta,\theta)}
	+
	{g_{\text{GR}}(x;\delta,\theta)\over G_{\text{GR}}(x;\delta,\theta)}\, [T(x)+1]
	\Big\{
	{(\beta+1)G_{\text{GR}}(x;\delta,\theta)^\beta}
	-
	\alpha\beta
	\Big[
	{T(x)^\alpha-1\over T(x)^\alpha+1}
	\Big]
	-1
	\Big\}
	\biggr\}.
\end{eqnarray*}
Equation \eqref{def-f-Y} gives   $f_Y(T(x);1,\alpha)T'(x)=f(x;\alpha,\beta,\delta,\theta)$, which is a positive function. Hence, any critical point of the pdf of $X$ satisfies the non-linear equation:
\begin{align*}
		{g'_{\text{GR}}(x;\delta,\theta)\over g_{\text{GR}}(x;\delta,\theta)}
	+
	{g_{\text{GR}}(x;\delta,\theta)\over G_{\text{GR}}(x;\delta,\theta)}\, [T(x)+1]
	\Biggl\{
	{(\beta+1)G_{\text{GR}}(x;\delta,\theta)^\beta}
	-
	\alpha\beta
	\biggl[
	{T(x)^\alpha-1\over T(x)^\alpha+1}
	\biggr]
	-1
	\Biggr\}=0.
\end{align*}
The previous result can be written equivalently as:
\begin{proposition}
A critical point of the pdf of $X$ satisfies
\begin{align}\label{eq-crit-points}
	{y''\over (y')^2}
	+
	{
	{(\beta+1)y^\beta}[y^{\alpha\beta}+(1-y^\beta)^\alpha ]
	-
	(\alpha\beta+1)
	{y^{\alpha\beta}-2(1-y^\beta)^\alpha}
	\over
	y(1-y^\beta)[y^{\alpha\beta}+(1-y^\beta)^\alpha ]}
=0,
\end{align}
where $y=y(x)=G_{\text{GR}}(x;\delta,\theta)=\gamma_1(\delta+1,\theta x^2)$.
\end{proposition}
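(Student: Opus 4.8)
The plan is to obtain \eqref{eq-crit-points} as a purely cosmetic rewriting of the non-linear equation for critical points displayed just before the statement, by re-expressing every quantity in it through the single function $y=y(x)=G_{\text{GR}}(x;\delta,\theta)$. First I would record the elementary identities that make this possible. Since $y'=g_{\text{GR}}(x;\delta,\theta)$ and $y''=g_{\text{GR}}'(x;\delta,\theta)$, we have $g_{\text{GR}}'/g_{\text{GR}}=y''/y'$ and $g_{\text{GR}}/G_{\text{GR}}=y'/y$. From the definition \eqref{def-T} of $T$, $T(x)=y^{\beta}/(1-y^{\beta})$, hence $T(x)+1=1/(1-y^{\beta})$, and raising to the power $\alpha$,
\[
\frac{T(x)^{\alpha}-1}{T(x)^{\alpha}+1}
=
\frac{y^{\alpha\beta}-(1-y^{\beta})^{\alpha}}{y^{\alpha\beta}+(1-y^{\beta})^{\alpha}} .
\]
Also, by \eqref{def-f-Y} the factor $f_Y(T(x);1,\alpha)\,T'(x)=f(x)$ pulled out in front is strictly positive on $(0,\infty)$, so the critical points of $f$ are precisely the zeros of the remaining bracket.

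Substituting these relations into that bracket turns the critical-point condition into
\[
\frac{y''}{y'}
+
\frac{y'}{y\,(1-y^{\beta})}
\left\{
(\beta+1)\,y^{\beta}
-
\alpha\beta\,\frac{y^{\alpha\beta}-(1-y^{\beta})^{\alpha}}{y^{\alpha\beta}+(1-y^{\beta})^{\alpha}}
-
1
\right\}
=0 .
\]
Next I would divide through by $y'=g_{\text{GR}}(x;\delta,\theta)$, which is legitimate because $g_{\text{GR}}(x;\delta,\theta)>0$ for all $x>0$; this converts the first term into $y''/(y')^{2}$. Finally I would place the curly bracket over the common denominator $y^{\alpha\beta}+(1-y^{\beta})^{\alpha}$, multiply by the prefactor $1/[y(1-y^{\beta})]$, and collect the terms in $y^{\beta}$, $y^{\alpha\beta}$ and $(1-y^{\beta})^{\alpha}$, so as to read off the single fraction in \eqref{eq-crit-points}. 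The converse implication follows by running the same chain of equalities backwards (multiply \eqref{eq-crit-points} by $(y')^{2}$, split the fraction, and undo the substitutions), which gives the claimed equivalence.

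There is no conceptual obstacle here: everything reduces to the identities above plus one round of clearing denominators. The only step that needs care is the last one, where the pieces $-\alpha\beta\bigl[y^{\alpha\beta}-(1-y^{\beta})^{\alpha}\bigr]$ and $-\bigl[y^{\alpha\beta}+(1-y^{\beta})^{\alpha}\bigr]$ arising from the terms $-\alpha\beta[\cdot]$ and $-1$ inside the bracket have to be combined with $(\beta+1)y^{\beta}\bigl[y^{\alpha\beta}+(1-y^{\beta})^{\alpha}\bigr]$ and the coefficients of $y^{\alpha\beta}$ and of $(1-y^{\beta})^{\alpha}$ tracked without sign slips; this is where I expect the bookkeeping to be most delicate. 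It is also worth noting at the outset that for $x\in(0,\infty)$ one has $y=G_{\text{GR}}(x;\delta,\theta)\in(0,1)$, so the three denominators $y$, $1-y^{\beta}$ and $y^{\alpha\beta}+(1-y^{\beta})^{\alpha}$ in \eqref{eq-crit-points} are strictly positive and all the divisions performed along the way are harmless.
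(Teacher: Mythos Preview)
Your proposal is correct and follows exactly the paper's approach: the paper presents this proposition simply as an equivalent rewriting of the preceding non-linear equation (``The previous result can be written equivalently as''), and you supply the details of that rewriting via the substitutions $y=G_{\text{GR}}$, $y'=g_{\text{GR}}$, $y''=g'_{\text{GR}}$ and the identities for $T(x)+1$ and $(T^\alpha-1)/(T^\alpha+1)$. Your instinct that the final denominator-clearing step is the delicate one is well placed; carrying it out carefully gives the numerator $(\beta+1)y^{\beta}[y^{\alpha\beta}+(1-y^{\beta})^{\alpha}]-(\alpha\beta+1)y^{\alpha\beta}+(\alpha\beta-1)(1-y^{\beta})^{\alpha}$, so the paper's displayed formula appears to contain a typographical slip at precisely that point.
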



\bigskip
In the remainder of this section, we use Equation \eqref{eq-crit-points} and the limit in \eqref{limit-f}
to analyze the modality of the pdf of $X$ when $\alpha=1$.

For $\alpha=1$, Equation \eqref{eq-crit-points} reduces to
\begin{align*}
{g_{\text{GR}}(x;\delta,\theta)\over G_{\text{GR}}(x;\delta,\theta)}
=
{\delta+\theta x^2-(1/2)\over  x}.
\end{align*}
Equivalently,
\begin{align}\label{eq-G-g}
G_{\text{GR}}(x;\delta,\theta)
=
g(x),
\end{align}
where
\begin{align*}
g(x)
=
x\,g_{\text{GR}}(x;\delta,\theta)
=
\frac{2 \theta^{\delta+1}}{\Gamma(\delta+1)}\,{ x^{2 (\delta+1)}\,{\rm{e}}^{- \theta x^2}\over \delta+\theta x^2-(1/2)}.
\end{align*}
A careful analysis shows that, on $(0,\infty)$, $g(x)$ is decreasing/decreasing-increasing-decreasing when $\delta<1/2$ or  unimodal when $\delta\geqslant 1/2$, and $\lim_{x\to 0^+}g(x)=\lim_{x\to\infty}g(x)=0$. Moreover, $g(x)$ has a
vertical asymptotic at $x_{\rm va} = \sqrt{[(1/2)-\delta]/\theta}$ when $\delta<1/2$, with $g(x)<0$ for all $x<x_{\rm va}$ and $g(x)>0$ for all $x>x_{\rm va}$.
Since $G_{\text{GR}}(x;\delta,\theta)$ is increasing on $(0,\infty)$, because this one is a cdf, it is plausible
to expect that, by varying the parameters, Equation \eqref{eq-G-g} has at most three roots on $(0,\infty)$. So, the pdf of $X$ has at
most three critical points on $(0,\infty)$. In the following,
we analyze some possible scenarios:

\begin{itemize}


\item[$\bullet$] If the GOLLGR pdf has a single critical point,
say $x_0$, and $\beta>1$ and $\delta\geqslant 1/2$,
we have by the second limit in \eqref{limit-f}
$
\lim_{x\to 0^+} f(x;\alpha,\beta,\delta,\theta)
=
\lim_{x\to\infty} f(x;\alpha,\beta,\delta,\theta)
=
0
$.
Then, $x_0$ is really a maximum point, and the pdf of $X$
is unimodal.


\item[$\bullet$] If the GOLLGR pdf has three single critical points, say $x_1<x_2<x_3$, and $\beta>1$
and $\delta\geqslant 1/2$, again, by the second limit in \eqref{limit-f}, we have
$
\lim_{x\to 0^+} f(x;\alpha,\beta,\delta,\theta)
=
\lim_{x\to\infty} f(x;\alpha,\beta,\delta,\theta)
=0$.
Hence, $x_1$ and $x_3$ are maximum points, and $x_2$ is
a minimum point, and the GOLLGR pdf is bimodal.
\end{itemize}

In general, as done previously, we can use the limit in \eqref{limit-f} and the number of critical points of the
pdf of $X$ to obtain the result:
\begin{theorem}
The {\rm GOLLGR} pdf is decreasing/ decreasing-increasing-decreasing/unimodal
or bimodal.
\end{theorem}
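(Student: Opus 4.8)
The plan is to convert the modality claim into a finite combinatorial enumeration driven by two ingredients: the boundary behaviour of $f$ and an upper bound on the number of its critical points on $(0,\infty)$. First I would record the boundary facts. For every admissible $(\alpha,\beta,\delta,\theta)$ one has $\lim_{x\to\infty}f(x)=0$, while $\lim_{x\to 0^+}f(x)$ is, by \eqref{limit-f}, always of exactly one of three types: zero, a finite positive constant, or $+\infty$. Since $f$ is $C^1$ and strictly positive on $(0,\infty)$, between any two consecutive critical points $f$ is strictly monotone, and the sign of $f'$ near the two ends of $(0,\infty)$ is forced: $f$ is eventually decreasing as $x\to\infty$ (otherwise it could not tend to $0$), and $f$ is decreasing on a right-neighbourhood of $0$ whenever $\lim_{x\to 0^+}f(x)=+\infty$.

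The crux is to show that $f$ has at most three critical points on $(0,\infty)$. By \eqref{eq-crit-points}, a critical point of $f$ corresponds to a solution, in the variable $y=G_{\text{GR}}(x;\delta,\theta)\in(0,1)$, of the equation obtained by adding $y''/(y')^2$ to an explicit rational function of $y$ with positive denominator $y(1-y^\beta)[y^{\alpha\beta}+(1-y^\beta)^\alpha]$. For $\alpha=1$ this was already done above: the equation collapses to $G_{\text{GR}}(x;\delta,\theta)=g(x)$ with $g(x)=x\,g_{\text{GR}}(x;\delta,\theta)/(\delta+\theta x^2-1/2)$, and since $G_{\text{GR}}$ is a strictly increasing cdf from $0$ to $1$ while $g$ — according to the sign of $\delta-1/2$ — is decreasing, decreasing-increasing-decreasing, or unimodal with $g(0^+)=g(\infty)=0$, a crossing-number comparison yields at most three solutions. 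For general $\alpha$ I would mimic this: using \eqref{def-der-g} to write $y''/(y')^2=(2\delta+2\theta x^2-1)/(x\,g_{\text{GR}}(x;\delta,\theta))$ explicitly, I would clear the positive denominator in \eqref{eq-crit-points} and bound the number of sign changes of the resulting function on $(0,\infty)$, splitting on whether $\delta<1/2$, $\delta=1/2$ or $\delta>1/2$ (this controls the pole and the sign of the first term) and on $\beta\lessgtr 1$.

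Granting the bound of at most three critical points, the proof closes by enumeration. With no critical point $f$ is monotone, hence \emph{decreasing}. With one critical point: if $\lim_{x\to 0^+}f(x)$ is finite and the point is a maximum then $f$ is \emph{unimodal}, while in the remaining subcases (stationary inflection, or $\lim_{x\to 0^+}f(x)=+\infty$) $f$ is again \emph{decreasing}. With two critical points the forced end-signs leave only the \emph{decreasing-increasing-decreasing} shape (or a degenerate unimodal/decreasing one). With three critical points $\lim_{x\to 0^+}f(x)$ must be finite and the only genuinely new pattern is maximum–minimum–maximum, i.e.\ \emph{bimodality}; when $\lim_{x\to 0^+}f(x)=+\infty$ three critical points still only produce a decreasing-increasing-decreasing profile. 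Since four or more critical points are excluded, no other shape can arise, and the four listed possibilities exhaust all cases. The main obstacle is plainly the second step: proving rigorously, for all $\alpha,\beta>0$, $\delta>-1$, $\theta>0$, that \eqref{eq-crit-points} has at most three roots — the difficulty being that the terms $y^{\alpha\beta}$, $(1-y^\beta)^\alpha$, $y^\beta$ and the non-elementary factor $y''/(y')^2$ interact, so the clean crossing argument available at $\alpha=1$ must be replaced by a more delicate case analysis (or, failing a complete proof, supplemented by the numerical evidence in Figure~\ref{gollgrpdf}).
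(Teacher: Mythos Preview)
Your proposal follows essentially the same route as the paper: combine the boundary limits in \eqref{limit-f} with an ``at most three critical points'' claim (obtained for $\alpha=1$ via the reduction to \eqref{eq-G-g} and then asserted in general), and enumerate the admissible shapes. The paper itself does not prove the general bound rigorously --- it writes ``it is plausible to expect'' and then states the theorem ``in general, as done previously'' --- so your candid identification of this step as the main obstacle matches the paper's own level of rigor.
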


\subsection{Tail behavior}

Here, we prove that, under certain constraints in the parameter space, the distribution of $X$ has thinner tails than an exponential distribution. More precisely, we prove the following two results:
\begin{proposition}
For any $\alpha\geqslant 1$ and any $t>0$,
\begin{align}\label{lim-object}
\lim_{x\to \infty}
{
	{\rm e}^{-tx}
	\over
	1-F(x;\alpha,\beta,\delta,\theta)
}
=\infty.
\end{align}
\end{proposition}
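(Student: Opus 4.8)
Write $G(x)=G_{\text{GR}}(x;\delta,\theta)=\gamma_1(\delta+1,\theta x^{2})$. By \eqref{cdfF}, the survival function of $X$ is
\begin{align*}
1-F(x;\alpha,\beta,\delta,\theta)=\frac{\big[1-G(x)^{\beta}\big]^{\alpha}}{G(x)^{\alpha\beta}+\big[1-G(x)^{\beta}\big]^{\alpha}}\,.
\end{align*}
Statement \eqref{lim-object} is just the assertion that $1-F(x)=o(\mathrm{e}^{-tx})$ for every $t>0$, so the plan is to produce constants $C>0$ and $x_{*}>0$ such that $1-F(x)\leqslant C\,\mathrm{e}^{-\theta x^{2}/2}$ for all $x\geqslant x_{*}$; then $\mathrm{e}^{-tx}/[1-F(x)]\geqslant C^{-1}\mathrm{e}^{\theta x^{2}/2-tx}\to\infty$, since $\theta>0$ makes the quadratic exponent outgrow the linear one.

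First I would bound the denominator from below: as $G(x)\to1$ when $x\to\infty$, pick $x_{1}$ with $G(x)^{\alpha\beta}\geqslant\tfrac12$ for $x\geqslant x_{1}$, giving $1-F(x)\leqslant 2\,\big[1-G(x)^{\beta}\big]^{\alpha}$. Since $\alpha\geqslant1$ and $0\leqslant 1-G(x)^{\beta}\leqslant1$, we have $\big[1-G(x)^{\beta}\big]^{\alpha}\leqslant 1-G(x)^{\beta}$, while the mean value theorem (or $1-u^{\beta}\sim\beta(1-u)$ as $u\to1$) gives $1-G(x)^{\beta}\leqslant c\,[1-G(x)]$ for some $c>0$ and all large $x$. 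It then remains to control the parent upper tail $1-G(x)=\Gamma(\delta+1,\theta x^{2})/\Gamma(\delta+1)$, which I would do by the crude split: for $u\geqslant y$ one has $-u\leqslant -\tfrac{u}{2}-\tfrac{y}{2}$, so
\begin{align*}
\Gamma(s,y)=\int_{y}^{\infty}u^{s-1}\mathrm{e}^{-u}\,du\leqslant \mathrm{e}^{-y/2}\int_{0}^{\infty}u^{s-1}\mathrm{e}^{-u/2}\,du=2^{s}\,\Gamma(s),
\end{align*}
hence $1-G(x)\leqslant 2^{\delta+1}\mathrm{e}^{-\theta x^{2}/2}$. Chaining the four inequalities yields $1-F(x)\leqslant C\,\mathrm{e}^{-\theta x^{2}/2}$ on $[x_{1},\infty)$ with $C=2c\,2^{\delta+1}$, which is the bound sought above, completing the argument.

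All steps are elementary, so there is no real obstacle; the only place that wants a moment's care is the incomplete-gamma tail estimate, but the one-line split above handles it (any crude bound faster than $\mathrm{e}^{-cx^{2}}$ with $c<\theta$ would do), and the hypothesis $\alpha\geqslant1$ enters only through $[1-G(x)^{\beta}]^{\alpha}\leqslant 1-G(x)^{\beta}$. A shorter but less self-contained alternative is to apply L'Hôpital to the $0/0$ ratio in \eqref{lim-object}: since $F'=f>0$ on $(0,\infty)$, the limit equals $\lim_{x\to\infty}t\,\mathrm{e}^{-tx}/f(x)$, and feeding in the $x\to\infty$ equivalent of $f$ from Section~\ref{sec:asymptotes}, namely $f(x)\sim \tfrac{2\alpha\beta\theta^{\delta+1}}{\Gamma(\delta+1)}\,x^{2\delta+1}\mathrm{e}^{-\theta x^{2}}\big[1-\gamma_1(\delta+1,\theta x^{2})^{\beta}\big]^{\alpha-1}$, one notes the bracketed factor is at most $1$ when $\alpha\geqslant1$, so $f(x)=O\!\big(x^{2\delta+1}\mathrm{e}^{-\theta x^{2}}\big)=o(\mathrm{e}^{-tx})$ and the ratio diverges.
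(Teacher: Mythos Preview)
Your argument is correct, and both the direct-bound route and the L'H\^opital alternative you sketch are valid. The paper takes a different path: it uses the log-logistic stochastic representation to write $1-F(x)=1/(1+T(x)^{\alpha})$ with $T(x)=G(x)^{\beta}/[1-G(x)^{\beta}]$, bounds
\[
\frac{\mathrm{e}^{-tx}}{1-F(x)}=\mathrm{e}^{-tx}\big[1+T(x)^{\alpha}\big]\geqslant \mathrm{e}^{-tx}T(x)^{\alpha}=\frac{\mathrm{e}^{-tx}G(x)^{\alpha\beta}}{[1-G(x)^{\beta}]^{\alpha}},
\]
and then applies L'H\^opital to $\mathrm{e}^{-tx}/[1-G(x)^{\beta}]^{\alpha}$; the hypothesis $\alpha\geqslant1$ enters by keeping the post-L'H\^opital denominator $\alpha\beta[1-G(x)^{\beta}]^{\alpha-1}G(x)^{\beta-1}$ bounded. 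Your main approach is more elementary---no L'H\^opital, no external incomplete-gamma inequality---and the split $\mathrm{e}^{-u}\leqslant\mathrm{e}^{-y/2}\mathrm{e}^{-u/2}$ for $u\geqslant y$ is a clean self-contained replacement for the Natalini--Palumbo bound the paper invokes in its companion proposition. Your L'H\^opital alternative is close in spirit to the paper's but applies the rule directly to the target ratio rather than to a simplified lower bound; both reduce to the same fact that $g_{\text{GR}}(x)$ has Gaussian-type decay.

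One small slip: in your displayed chain for $\Gamma(s,y)$, the final ``$=2^{s}\,\Gamma(s)$'' drops the factor $\mathrm{e}^{-y/2}$; the correct conclusion is $\Gamma(s,y)\leqslant 2^{s}\,\Gamma(s)\,\mathrm{e}^{-y/2}$, which is exactly what you use in the next sentence, so this is purely notational.
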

\begin{proof}
If $Y \sim \mbox{LL}(1,\alpha)$, by \eqref{cdf-Y}, $F(x;\alpha,\beta,\delta,\theta)=
\mathbbm{P}(Y\leqslant T(x))$. Moreover, it is well-known that $\mathbbm{P}(Y\leqslant y)=y^\alpha/(1+y^\alpha)$.
Then, from the definition \eqref{def-T} of $T$, we have
(for any $t>0$),
\begin{align}\label{ineq-in}
{
{\rm e}^{-tx}
\over
1-F(x;\alpha,\beta,\delta,\theta)
}
=
{
	{\rm e}^{-tx}
	\over
	1-\mathbbm{P}(Y\leqslant T(x))
}
&=
{\rm e}^{-tx}[1+T(x)^\alpha]
\nonumber
\\[0,2cm]
&\geqslant
{\rm e}^{-tx}
T(x)^\alpha
=
{
{\rm e}^{-tx} \,
G_{\text{GR}}(x;\delta,\theta)^{\alpha\beta}
\over
\big[1-G_{\text{GR}}(x;\delta,\theta)^{\beta}\big]^\alpha
}.
\end{align}

The L'Hôpital's rule yields
\begin{align}\label{hospital}
\lim_{x\to \infty}
{
	{\rm e}^{-tx}
	\over
	\big[1-G_{\text{GR}}(x;\delta,\theta)^{\beta}\big]^\alpha
}
=
\lim_{x\to \infty}
{
	t \big[{{\rm e}^{-tx}\over	g_{\text{GR}}(x;\delta,\theta)}\big]
	\over
	\alpha\beta\big[1-G_{\text{GR}}(x;\delta,\theta)^{\beta}\big]^{\alpha-1} G_{\text{GR}}(x;\delta,\theta)^{\beta-1}
}.
\end{align}
Since (for $\alpha\geqslant 1$),
\begin{align*}
\lim_{x\to \infty}
{
	{\rm e}^{-tx}
	\over
	g_{\text{GR}}(x;\delta,\theta)	
}
=
\biggl[
\frac{2 \theta^{\delta+1}}{\Gamma(\delta+1)}\,x^{2 \delta+1}\,{\rm{e}}^{- \theta x^2+tx}
\biggr]^{-1}
=
\infty
\end{align*}
and $\lim_{x\to \infty}G_{\text{GR}}(x;\delta,\theta)=1$.
We have from \eqref{hospital}
\begin{align*}
\lim_{x\to \infty}
{
	{\rm e}^{-tx}
	\over
	\big[1-G_{\text{GR}}(x;\delta,\theta)^{\beta}\big]^\alpha
}
=
\infty.
\end{align*}
By taking $x\to\infty$ for both sides of inequality \eqref{ineq-in} and by using the above limit, it follows \eqref{lim-object}.
\end{proof}

\begin{proposition}
For any $0<\beta\leqslant 1$ and $\delta>0$, the limit (for $t>0$) \eqref{lim-object} holds.
\end{proposition}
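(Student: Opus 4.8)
The plan is to run the argument of the preceding proposition, with the hypothesis $\alpha\geqslant 1$ used there replaced by the elementary bound
\[
1-G_{\text{GR}}(x;\delta,\theta)^{\beta}\ \leqslant\ 1-G_{\text{GR}}(x;\delta,\theta),\qquad x>0,
\]
which holds precisely because $0\leqslant G_{\text{GR}}(x;\delta,\theta)\leqslant 1$ and $0<\beta\leqslant 1$. Keeping the convention that arguments are omitted and writing $Y\sim\mbox{LL}(1,\alpha)$, from $\mathbbm{P}(Y\leqslant y)=y^{\alpha}/(1+y^{\alpha})$ together with \eqref{cdf-Y} and \eqref{def-T} one has $1-F=[1+T^{\alpha}]^{-1}$, and hence, as in \eqref{ineq-in},
\begin{align*}
\frac{{\rm e}^{-tx}}{1-F}
&={\rm e}^{-tx}(1+T^{\alpha})
\geqslant {\rm e}^{-tx}T^{\alpha}
={\rm e}^{-tx}\,G_{\text{GR}}^{\alpha\beta}\,\bigl(1-G_{\text{GR}}^{\beta}\bigr)^{-\alpha}\\
&\geqslant {\rm e}^{-tx}\,G_{\text{GR}}^{\alpha\beta}\,\bigl(1-G_{\text{GR}}\bigr)^{-\alpha},
\end{align*}
the last step combining the displayed bound with $\alpha>0$.

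Next I would reduce the statement to a single L'H\^opital computation for the parent cdf. Since $G_{\text{GR}}(x;\delta,\theta)\to 1$ as $x\to\infty$, the factor $G_{\text{GR}}^{\alpha\beta}$ stays bounded away from $0$ for large $x$, so it is enough to show ${\rm e}^{-tx}\,(1-G_{\text{GR}})^{-\alpha}\to\infty$. Writing this quantity as $\bigl({\rm e}^{-sx}(1-G_{\text{GR}})^{-1}\bigr)^{\alpha}$ with $s:=t/\alpha>0$ and using that $w^{\alpha}\to\infty$ as $w\to+\infty$ (since $\alpha>0$), the claim reduces to
\[
\lim_{x\to\infty}\frac{{\rm e}^{-sx}}{1-G_{\text{GR}}(x;\delta,\theta)}=\infty\qquad\text{for every }s>0.
\]
Passing to the $\alpha$-th root at this point is exactly what removes the restriction on $\alpha$: differentiating $(1-G_{\text{GR}}^{\beta})^{\alpha}$ directly, as in the previous proof, would produce a factor $(1-G_{\text{GR}}^{\beta})^{\alpha-1}$ that blows up when $\alpha<1$, and the hypothesis $0<\beta\leqslant 1$ (unlike $\alpha\geqslant 1$) does nothing to control it.

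Finally I would evaluate this last limit by L'H\^opital's rule (a $0/0$ form, with the denominator having nonvanishing derivative near $+\infty$), which, with $g_{\text{GR}}$ as in \eqref{pdfGR}, gives
\[
\begin{aligned}
\lim_{x\to\infty}\frac{{\rm e}^{-sx}}{1-G_{\text{GR}}(x;\delta,\theta)}
&=\lim_{x\to\infty}\frac{s\,{\rm e}^{-sx}}{g_{\text{GR}}(x;\delta,\theta)}\\
&=\frac{s\,\Gamma(\delta+1)}{2\,\theta^{\delta+1}}\,\lim_{x\to\infty}\frac{{\rm e}^{\,\theta x^{2}-sx}}{x^{2\delta+1}}
=\infty,
\end{aligned}
\]
since $\theta>0$ forces the Gaussian-type exponential to dominate the power $x^{2\delta+1}$. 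Chaining the three displays then yields \eqref{lim-object}. I do not foresee a genuine obstacle: the only step carrying real content -- and the one I would flag as a pitfall -- is the reduction in the second paragraph, where one must peel off the exponent $\alpha$ by taking roots \emph{before} invoking L'H\^opital; everything else (the limit $G_{\text{GR}}\to 1$, the L'H\^opital hypotheses, and the exponential-beats-polynomial estimate) is routine, and the assumption $\delta>0$ enters only through \eqref{pdfGR}, the substantive requirement being $\theta>0$.
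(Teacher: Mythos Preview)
Your argument is correct, and it differs from the paper's in two places. First, where the paper invokes the $C_p$ inequality to obtain $1-G_{\text{GR}}^{\beta}\leqslant (1-G_{\text{GR}})^{\beta}$, you use the simpler bound $1-G_{\text{GR}}^{\beta}\leqslant 1-G_{\text{GR}}$ (valid because $u^{\beta}\geqslant u$ for $u\in[0,1]$ and $\beta\leqslant 1$). Second, and more substantively, the paper handles the tail of $1-G_{\text{GR}}=\Gamma_1(\delta+1,\theta x^2)$ by quoting the Natalini--Palumbo inequality $\Gamma(a,x)<Bx^{a-1}{\rm e}^{-x}$, which requires $a=\delta+1>1$ and is precisely where the hypothesis $\delta>0$ is used; you instead peel off the exponent $\alpha$ by writing ${\rm e}^{-tx}(1-G_{\text{GR}})^{-\alpha}=\bigl[{\rm e}^{-(t/\alpha)x}(1-G_{\text{GR}})^{-1}\bigr]^{\alpha}$ and then apply L'H\^opital once to the inner ratio. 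Your route is more elementary (no external inequality needed) and, as you correctly observe, actually dispenses with the restriction $\delta>0$: the final limit $e^{\theta x^{2}-sx}/x^{2\delta+1}\to\infty$ holds for every $\delta>-1$, so your proof establishes the proposition over the full parameter range of the model. The paper's approach, by contrast, yields an explicit exponential upper bound on $[1-G_{\text{GR}}^{\beta}]^{\alpha}$, which could be useful if one wanted quantitative tail estimates rather than just the limit.
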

\begin{proof}
By inequality in \eqref{ineq-in}, it is enough to prove
\begin{align}\label{claim}
\lim_{x\to \infty}
{
	{\rm e}^{-tx}
	\over
	\big[1-G_{\text{GR}}(x;\delta,\theta)^{\beta}\big]^\alpha
}
=
\lim_{x\to \infty}
{
	1
	\over
	{\rm e}^{tx}\big[1-\gamma_1(\delta+1,\theta x^2)^{\beta}\, \big]^\alpha
}
=\infty.
\end{align}

Indeed, by using the $C_p$ inequality:
\begin{align*}
	\forall x,y\geqslant 0; \ (x+y)^p\leqslant C_p(x^p+y^p), \quad
	\text{where} \ p>0 \ \text{and} \ C_p=\max\{1,2^{p-1}\};
\end{align*}
we have (for $0<\beta\leqslant 1$)
\begin{align}\label{ineq-1}
	1-\gamma_1(\delta+1,\theta x^2)^{\beta}
	\leqslant
	[1-\gamma_1(\delta+1,\theta x^2)]^{\beta}
	=
	\Gamma_1(\delta+1,\theta x^2)^{\beta},
\end{align}
where $\Gamma_1(p,x)=\Gamma(p)^{-1}\int_{x}^{\infty}w^{p-1}\,
{\rm{e}}^{-w} d w$ is the upper incomplete gamma function ratio.

By using the inequality of Natalini and Palumbo (2000): for $a > 1$, $B > 1$ and $x >B(a-1)/(B-1)$,
\begin{align*}
\Gamma(a,x)<B\, x^{a-1}\, {\rm e}^{-x};
\end{align*}
we have (for $ x >\sqrt{B\delta/[\theta(B-1)]}$ and $\delta> 0$)
\begin{align}\label{ineq-2}
\Gamma_1(\delta+1,\theta x^2)^{\beta}<
	B^\beta\, \theta^{\beta\delta}\, \Gamma(\delta+1)^{-\beta} x^{2\beta\delta}\, {\rm e}^{- \beta\theta x^2}.
\end{align}

By combining \eqref{ineq-1} and \eqref{ineq-2}, we obtain (for any $ x >\sqrt{B\delta/[\theta(B-1)]}$)
\begin{align*}
		{\rm e}^{tx}\big[1-\gamma_1(\delta+1,\theta x^2)^{\beta}\, \big]^\alpha
		<
		B^{\alpha\beta} \theta^{\alpha\beta\delta} \Gamma(\delta+1)^{-\alpha\beta} x^{2\alpha\beta\delta} {\rm e}^{-\alpha\beta\theta x^2+tx}.
\end{align*}
Letting $x\to \infty$ in the above inequality, we have ${\rm e}^{tx}\big[1-\gamma_1(\delta+1,\theta x^2)^{\beta}\, \big]^\alpha$ tends to zero, proving the limit in \eqref{claim}. Thus, we complete the proof.
\end{proof}

\subsection{Linear Representation}\label{sec:linear_rep}

First, the {\it exponentiated-G} (``Exp-G'') random variable
$W\sim\mbox{Exp}^c$G for a continuous cdf $G(x)$ and $c>0$,
has cdf $H_c(x)=G(x)^c$ and pdf $h_c(x)=c\,g(x)\,G(x)^{c-1}$.
Many Exp-G properties were published in the last three decades.

We obtain after some algebra the power series
\begin{equation}\label{denF}
G(x)^{\alpha\beta}+\big[1-G(x)^\beta\big]^\alpha
=\sum_{k=0}^{\infty} c_k\, G(x)^k \, ,
\end{equation}
where $a_k=a_k(\alpha \beta)
=\sum_{j=k}^\infty  (-1)^{j+k}
    \binom{\alpha \beta}{j} \binom{j}{k}$, and
$$c_k=c_k(\alpha,\beta)=a_k(\alpha \beta)+
\sum_{i=0}^\infty \sum_{j=k}^\infty
(-1)^{i+j+k} \binom{\alpha}{i} \binom{i\beta}{j} \binom{j}{k}$$.

Equation (\ref{cdfF}) can be rewritten from the ratio of two power series as
\begin{eqnarray}\label{expcdf}
F(x)=\sum_{k=0}^\infty d_k\,G(x)^k,
\end{eqnarray}
where $d_k=d_k(\alpha,\beta)$'s
are found found recursively (for $k > 0$, $d_0=a_0/c_0$)
$$d_k=c_0^{-1}\left(a_k+
\sum_{r=1}^k c_r\,d_{k-r}\right).$$

By differentiating (\ref{expcdf}) and changing indices
\begin{equation*}
f(x)=\sum_{l=0}^\infty
(l+1)\,d_{l+1}\,g(x)\, G(x)^{l} \, .
\end{equation*}
For the GR model, we obtain
\begin{equation}\label{linereppdfGOLLG}
f(x)=\sum_{l=0}^\infty
(l+1)\, d_{l+1}(\alpha,\beta)\,
\frac{2 \theta^{\delta+1}}{\Gamma(\delta+1)}
\,x^{2 \delta+1}\,{\rm{e}}^{- \theta x^2}
\, \gamma_1(\delta+1,\theta x^2)^{l} ,
\end{equation}
The power series for the incomplete gamma function ratio
holds
\begin{equation*}\label{expdistfun}
\gamma_1(\delta+1,\theta x^2)
=\frac{(\theta x^2)^{\delta+1}}{\Gamma(\delta+1)}
\sum_{m=0}^\infty \frac{(-1)^m
(\theta x^2)^m}{m!\, (\alpha+1+m)}\, \cdot
\end{equation*}

Equation 0.314 in Gradshteyn and Ryzhik (2000) gives
(for a natural number $l\ge 1$)
$$
\left(\sum_{m=0}^\infty q_m \,x^m\right)^l
=\sum_{m=0}^\infty e_{m}^{(l)}\, x^m \, ,
$$
where $e_{0}^{(l)}=q_0^l$, and $e_{m}^{(l)}$ (for $l\ge 1$)
can be found from
\begin{equation}\label{recurrent}
e_{m}^{(l)}=\frac{1}{m\, q_0}\sum_{i=1}^{m}[(l+1)\,i-m]\, q_i \, e_{m-i}^{(l)}.
\end{equation}

Then,
\begin{equation}\label{power}
\gamma_1(\delta+1,\theta x^2)^l
=\frac{(\theta x^2)^{l(\delta+1)}}{\Gamma(\delta+1)^l}
\sum_{m=0}^\infty e_{m}^{(l)} \,  x^{2m}\, ,
\end{equation}
where the quantities $e_{m}^{(l)}$ follow
from (\ref{recurrent}) with the constants
$$
q_m=\frac{(-1)^m
\theta^m}{(\delta+1+m)m!}
$$
for $m=0,1,\ldots$.

Further, we set the conditions $e_0^{(0)}=1$ and
$e_m^{(0)}=0$ for $m\ge 1$.
Hence, inserting (\ref{power}) in Equation (\ref{linereppdfGOLLG}) (under these conditions) yields
 \begin{align*}
f(x)
&=\sum_{l,m=0}^\infty
     \frac{2\,(l+1)
     \, d_{l+1}(\alpha,\beta)
     \, e_{m}^{(l)}}{\Gamma(\delta+1)^{l+1}\theta^m}
     \,\theta^{[l(\delta+1)+m+\delta]+1}x^{2[l(\delta+1)+m+\delta]+1}
     {\rm{e}}^{-\theta x^2}
\end{align*}
and then
\begin{align}\label{combination}
    f(x)
    &=
     \sum_{l,m=0}^\infty
     w_{l,m}\,
     g_{\text{GR}}(x;\theta,\delta_{l,m}^\ast) \, ,
\end{align}
where $\delta_{l,m}^*=l(\delta+1)+m+\delta$, and
the coefficients are
\begin{align*}
    w_{l,m}=w_{l,m}(\theta,\delta,\alpha,\beta)
    =
     \frac{(l+1)\,\Gamma(\delta_{l,m}^\ast+1)
     \, d_{l+1}(\alpha,\beta)
     \, e_{m}^{(l)}}{\theta^m\,\Gamma(\delta+1)^{l+1}} \, \cdot
\end{align*}

Equation (\ref{combination}) is useful to obtain some properties
of $X$ from those of the GR model.


\subsection{Properties}\label{sec:properties}

The $s$th ordinary moment of $X$ comes from (\ref{momentGR}) and (\ref{combination}) as
\begin{equation*}\label{moment}
E(X^s)=
\sum_{l,m=0}^\infty
w_{l,m}(\theta,\delta,\alpha,\beta)
\frac{\Gamma(\nicefrac{s}{2}+\delta_{l,m}^\ast+1)}{\theta^{s/2}\,
\Gamma(\delta_{l,m}^\ast+1)}\, \cdot
\end{equation*}

The $s$th incomplete moment of $X$ follows from (\ref{combination}) as
\begin{align*}
    m_s(x)
     &=
     \sum_{l,m=0}^\infty
     \frac{w_{l,m}(\theta,\delta,\alpha,\beta)}{
     \Gamma(\delta_{l,m}^\ast+1)}
     \int_0^x 2\theta^{\delta_{l,m}^\ast+1}\, t^s\, t^{2\delta_{l,m}^\ast+1}\,
     {\rm e}^{-\theta t^2} \, dt  \\
     &=
     \sum_{l,m=0}^\infty
     w_{l,m}(\theta,\delta,\alpha,\beta)
     \frac{\Gamma(\delta_{l,m}^\ast+\nicefrac{s}{2}+1)
     }{\Gamma(\delta_{l,m}^\ast+1) \theta^{\nicefrac{s}{2}}}
     \int_0^x \frac{2\theta^{\delta_{l,m}^\ast+\nicefrac{s}{2}+1}}{
     \Gamma(\delta_{l,m}^\ast+\nicefrac{s}{2}+1)}
     \, t^{2(\delta_{l,m}^\ast+\nicefrac{s}{2})+1}
     {\rm e}^{-\theta t^2} \, dt  \\
     &=
     \sum_{l,m=0}^\infty
     w_{l,m}(\theta,\delta,\alpha,\beta)
     \frac{\Gamma(\nicefrac{s}{2}+\delta_{l,m}^\ast+1)
     }{\theta^{\nicefrac{s}{2}} \, \Gamma(\delta_{l,m}^\ast+1)}
     \gamma_1(\delta_{l,m}^\ast+\nicefrac{s}{2}+1,\theta x^2)
      \, .
\end{align*}

The mean deviations and Bonferroni and Lorenz curves of $X$ are
obtained from $m_1(x)$.

The generating function (gf) of $X$ can be expressed as
\begin{align*}
M(t)
=
\int_0^\infty
{\rm{e}}^{tx}\,
\sum_{l,m=0}^\infty
w_{l,m}(\theta,\delta,\alpha,\beta)\,
g_{\text{GR}}(x; \theta,\delta_{l,m}^\ast) \, dx \, ,
\end{align*}
that is,
\begin{align*}
M(t)
&=
\sum_{l,m=0}^\infty
\frac{2\, w_{l,m}(\theta,\delta,\alpha,\beta)
\, \theta^{\delta_{l,m}^\ast+1}}{\Gamma(\delta_{l,m}^\ast+1)}
\int_0^\infty x^{2\delta_{l,m}^\ast+1}\,
{\rm e}^{-\theta x^2+tx}\,dx \, .
\end{align*}

From Equation 2.3.15.3 in Prudnikov {\it et al.} (1986), we can write
  $$
  \int_0^\infty x^{\alpha-1}{\rm e}^{-px^2-qx} dx=
  \frac{\Gamma(\alpha)}{(2p)^{\nicefrac{\alpha}{2}}}
  \exp \left( \frac{q^2}{8p}\right)\,
  D_{-\alpha}\left( \frac{q}{\sqrt{2p}}\right) \, ,
  $$
where  $\text{Re}(\alpha), \,\text{Re}(p) >0$, and
\begin{equation*}
D_p(y)= \frac{\exp(-y^2/4)}{\Gamma(-p)}\int_0^{\infty}\exp\{-(wy+w^2/2)\}w^{-(p+1)}\,dw.
\end{equation*}
Thus,
\begin{align*}
M(t)
&=
\sum_{l,m=0}^\infty
\frac{2\, w_{l,m}
\, \theta^{\delta_{l,m}^\ast+1}}{\Gamma(\delta_{l,m}^\ast+1)}
\frac{\Gamma(\tilde{\delta}_{l,m})}{(2\theta)^{\nicefrac{\tilde{\delta}_{l,m}}{2}}}
\exp\left(\frac{t^2}{8\theta}\right)D_{-\tilde{\delta}_{l,m}}\left(- \frac{t}{\sqrt{2\theta}}\right),
\end{align*}
where $\tilde{\delta}_{l,m}=2\,(\delta_{l,m}^\ast+1)$.

\subsection{Estimation}\label{sec:estimation}

We obtain the maximum likelihood estimate (MLE) of
$\etn=(\alpha,\beta,\delta,\theta)^\top$ given the data  $x_1,\dots,x_n$ from the GOLLGR distribution.

The total log-likelihood function for $\etn$ is
\begin{align}\label{lntau}
l_n(\etn)
&=
n\left[ \log (\alpha \beta) 
+(\delta+1)\log \theta \right]
+(2\delta+1) \sum_{i=1}^n \log x_i
-\theta
\sum_{i=1}^n x_i^2 \nonumber \\
&+(\alpha\beta-1) \sum_{i=1}^n \log
\gamma_1(\delta+1,\theta x_i^2)
+(\alpha -1)
\sum_{i=1}^n\log[1-
\gamma_1(\delta+1,\theta x_i^2)^\beta] \nonumber \\
&-
2 \sum_{i=1}^n \log \big\{ \gamma_1(\delta+1,\theta x_i^2)^{\alpha\beta}
+[1-\gamma_1(\delta+1,\theta x_i^2)^\beta]^\alpha\big\}.
\end{align}

The maximization of (\ref{lntau}) can be done
 using the {\tt R} software or SAS (PROC NLMIXED),
 among others.

\section{The GOLLGR regression}\label{sec:regression}

Recently some papers on regression models have been published, for example, see, Hashimoto {\it et al.} (2019), Prataviera {\it et al.} (2020), 
Silva {\it et al.} (2020) and Vasconcelos {\it et al.} (2021). Based on these papers we introduced the regression model based on the GOLLGR distribution.

The systematic components of the GOLLGR re\-gression for $X$ are defined by
 \begin{eqnarray}\label{reg_1}
\theta_{i}=\exp(\vn_{i}^{\top}\lamn_{1}) \qquad \text{and} \qquad \delta_{i}= \exp (\vn_{i}^{\top}\lamn_{2})-1 \qquad i=1,\ldots,n,
\end{eqnarray}
where $\lamn_1=(\lambda_{11},\ldots,\lambda_{1p})^{\top}$  and $\lamn_2=(\lambda_{21},\ldots,\lambda_{2p})^{\top}$ are vectors of unknown coefficients, and $\vn_{i}^{\top}=(v_{i1},\ldots,v_{ip})$ is a vector of known explanatory variables.

The survival function of $X$ comes from (\ref{cdfF}) as
\begin{eqnarray}\label{reg_gollfw}
S(x_i|\vn)
=
\frac{[1-\gamma_{1}(\delta_i+1,\theta_i\,x_{i}^{2})^{\beta}]^{\alpha}}{\gamma_{1}(\delta_i+1,\theta_i\,x_{i}^{2})^{\alpha\,\beta}
+\left[1-\gamma_{1}(\delta_i+1,\theta_i\,x_{i}^{2})^{\beta}\right]^{\alpha}}.
\end{eqnarray}

Equation (\ref{reg_gollfw}) opens new possibilities for fitting different types of regressions. The odd log-logistic GR (OLLGR) regression follows when $\beta=1$, the exponentiated GR (EGR) regression when $\alpha=1$, and the GR regression when $\beta=\alpha=1$.

Let $X_i$ be the lifetime and $C_i$ be the non-informative censoring time (assuming independent), and $x_i=\min\{X_i,C_i\}$. The total log-likelihood function for
$\etn=(a,b,\lamn_{1}^{\top})^{\top}$ from regression (\ref{reg_1}) is
\begin{align}\label{veroy}
l(\etn)&=
r \log\left(\frac{\alpha\,\beta\,2}{\Gamma(\delta_i+1)}
\right)
+(\delta_i+1)
\sum_{i \in F}\log(\theta_i)
+(2\,\delta_i+1)
\sum_{i \in F}\log(x_i)
-\sum_{i \in F}\theta_{i}x_{i}^{2}\nonumber\\
&+(\alpha\,\beta-1)
\sum_{i \in F}\log[\gamma_{1}(\delta_i+1,\theta_i\,x_{i}^{2})]
+(\alpha-1)
\sum_{i \in F}\log[1
-\gamma_{1}(\delta_i+1,\theta_i\,x_{i}^{2})^\beta]
\nonumber\\
&-2\sum_{i \in F}\log\left\{\gamma_{1}(\delta_i+1,\theta_i\,x_{i}^{2})^{\alpha\,\beta}
+[1-\gamma_{1}(\delta_i+1,\theta_i\,x_{i}^{2})^{\beta}]^{\alpha}\right\}\nonumber\\
&+\sum_{i \in C} \log\left\{\frac{[1-\gamma_{1}(\delta_i+1,\theta_i\,x_{i}^{2})^{\beta}]^{\alpha}}{\gamma_{1}(\delta_i+1,\theta_i\,x_{i}^{2})^{\alpha\,\beta}
+\left[1-\gamma_{1}(\delta_i+1,\theta_i\,x_{i}^{2})^{\beta}\right]^{\alpha}}\right\},
\end{align}
where $F$ and $C$ are the sets of observed lifetimes and
censoring times, $r$ is the number of uncensored observations (failures).
The MLE $\widehat{\etn}$ of the vector of unknown parameters can be
found by maximizing (\ref{veroy}).

\subsection{Residual analysis}

The quantile residuals (qrs) (Dunn and Smith, 1996)
have been adopted frequently in regression
applications to verify possible deviations from the model
assumptions.
For the proposed regression, they are
\begin{eqnarray}\label{resq}
qr_{i}=\Phi^{-1}
\left\{
\frac{\gamma_1(\hat{\delta_{i}}+1,\hat{\theta_{i}} x_{i}^2)^{\hat{\alpha} \hat{\beta}}}
{\gamma_1(\hat{\delta_{i}}+1,\hat{\theta_{i}} x_{i}^2)^{\hat{\alpha} \hat{\beta}}+\big[1-\gamma_1(\hat{\delta_{i}}+1,\hat{\theta_{i}} x_{i}^2)^{\hat{\beta}}\big]^{\hat{\alpha}}}
\right\},
\end{eqnarray}
where $\Phi^{-1}(\cdot)$ is the standard normal qf.

\subsection{Simulations}\label{sec:simulations}

In this section, two simulation studies are presented by using \texttt{gamlss} package in {\tt R} software.\\

{\bf The GOLLGR distribution}\\

First, we generate 1,000 samples from Equation (\ref{quantile}) with $\alpha=0.35$, $\beta=0.55$, $\delta=-0.55$ and $\theta=0.11$, and sample sizes $n=50$, $150$ and $500$, and
calculate the MLEs. The average estimates (AEs), biases and mean squared errors (MSEs) in Table \ref{simulation1} show that
the AEs converge to the true values and the biases and MSEs decrease when $n$ increases.\\

\begin{table}[!htb]
\centering{\caption{Simulation results from the GOLLGR distribution.\label{simulation1}} \vspace*{0.3cm}
\begin{tabular}{c|ccc|ccc|ccc}
\hline \hline
   & $n=50$ & & & $n=150$ && &  $n=500$ &  &  \\
  \hline \hline
  Parameter & AE & Bias & MSE  & AE & Bias & MSE & AE & Bias & MSE \\
  \hline \hline
 $\alpha$ & 0.3533 & 0.0033 & 0.0205 &    0.3487 & -0.0013 & 0.0050  & 0.3488 & -0.0012 & 0.0027 \\
    $\beta$ & 0.7140 & 0.1640 & 0.3857  &  0.5861 & 0.0361 & 0.0197  & 0.5688 & 0.0188 & 0.0095 \\
    $\delta$ & -0.5217 & 0.0283 & 0.1054    & -0.5512 & -0.0012 & 0.0044  & -0.5514 & -0.0014 & 0.0018 \\
    $\sigma$ & 0.1484 & 0.0384 & 0.0115   & 0.1188 & 0.0088 & 0.0015  & 0.1148 & 0.0048 & 0.0008 \\
    \hline  \hline
  \end{tabular}}
\end{table}

{\bf The GOLLGR regression}\\

The second study examines the accuracy of the MLEs in the
proposed regression. The observations are generated from $X_{i} \sim \text{GOLLGR} (\alpha,\beta,\delta_{i}, \theta_{i})$ and $v_{1i} \sim \text{Binomial} (1,0.5)$, where $v_{1i}$ is taken
in two groups (0 and 1). We consider 1,000 samples for $\alpha=0.37$, $\beta=0.61$, $\lambda_{10}=0.55$, $\lambda_{11}=1.75$, $\lambda_{20}=0.65$ and $\lambda_{21}=2.75$ and $n$ = $150$, $350$ and $650$, and the simulation process follows as:

(i) Generate $v_{1i} \sim \text{Binomial} (1,0.5)$;

(ii) Calculate $\delta_{i}$ and $\theta_{i}$ from the systematic components: $\delta_{i}=\text{exp}(\lambda_{10}+\lambda_{11}v_{1i})-1$ and $\theta_{i}= \text{exp}(\lambda_{20}+\lambda_{21}v_{1i})$, respectively;

(iii) Generate $u_{i}\sim U(0,1)$;

(iv) The previous steps yield $x_{i}$'s from (\ref{quantile}).

The numbers in Table \ref{simulation2} reveal that the AEs
converge to the true values, and the biases and MSEs decrease when $n$ increases, thus indicating the consistency of the estimators.

\begin{table}[!htb]
\centering{\caption{Simulation results from the GOLLGR regression.\label{simulation2}} \vspace*{0.3cm}
\begin{tabular}{c|ccc|ccc|ccc}
  \hline \hline
   & $n=150$ & & & $n=350$ && &  $n=650$ &  &  \\
  \hline \hline
  Parameters & AE & Bias & MSE & AE & Bias & MSE &  AE & Bias & MSE \\
  \hline \hline
  $\alpha$ & 0.3736 & 0.0036 & 0.0146 &  0.3700 & 0.0000 & 0.0055 &   0.3722 & 0.0022 & 0.0028  \\
     $\beta$ & 0.7324 & 0.1224 & 0.2991 &  0.6615 & 0.0515 & 0.0897 &  0.6317 & 0.0217 & 0.0332  \\
$\lambda_{10}$ & 0.5917 & 0.0417 & 0.2022 &  0.5787 & 0.0287 & 0.0973 &   0.5618 & 0.0118 & 0.0467 \\
$\lambda_{11}$ & 1.7715 & 0.0215 & 0.0535 &  1.7586 & 0.0086 & 0.0219 &  1.7553 & 0.0053 & 0.0114 \\
   $\lambda_{20}$ & 0.7308 & 0.0808 & 0.1444 &  0.6955 & 0.0455 & 0.0606 &   0.6640 & 0.0140 & 0.0307  \\
  $\lambda_{21}$ & 2.7400 & -0.0100 & 0.0582 &  2.7450 & -0.0050 & 0.0249 &  2.7518 & 0.0018 & 0.0127  \\
   \hline \hline
\end{tabular}}
\end{table}

\section{Applications}\label{sec:applicatons}

Here, we compare the GOLLGR model and its GR, EGR,
OLLGR sub-models in two applications. We determine the MLEs,
and the criteria: Akaike Information Criterion (AIC), Consistent Akaike, Information Criterion (CAIC), and Bayesian
Information Criterion (BIC).

\subsection{Application 1: Voltage data}
We consider the times of failure and running times for a field-tracking study (Meeker and Escobar, 1998).
We fit the Rayleigh distribution ($\alpha=\beta=1$) to
find initial values for $\theta$ and $\delta$. All computations
are done through the NLMIXED subroutine in SAS. Table \ref{EMV1} lists the MLEs (their standard errors in parentheses), and the previous measures, which reveal that the GOLLG distribution
can be chosen as the best model.

\begin{table}[!htb]
\centering {\caption{Findings for voltage data.  \label{EMV1}}\vspace*{0.3cm}
\begin{tabular}{c|cccc|ccccc}
\hline \hline Model  & $\alpha$ & $\beta$ & $\delta$ & $\theta$  &$\rm{AIC}$ &$\rm{CAIC}$ & $\rm{BIC}$ \\
\hline \hline
GOLLGR &  0.0437  & 0.4611 & 34.6605 & 0.0011 &362.0 & 363.6 & 367.6\\
 & (0.0111)& (01515)& (0.00002)& (3.958E-6)&  & &  \\\hline
OLLGR &  0.0246  & 1 & 37.0100 & 0.0010 &383.2 & 384.1 & 387.4\\
 & (0.0041)& & (0.0070)& (0.00006)&  & &  \\\hline
EGR &  1  & 0.0133 & 35.3427 & 0.00019 &366.5 & 367.4 & 370.7\\
 & & (0.0014)& (0.0051)& (0.00003)&  & &  \\\hline
GR &  1  & 1 & -0.5079 & 0.000011 &368.4 & 368.8 & 371.2\\
 & &  & (0.1147)& (5.1E-6)&  & &  \\\hline
\hline
\end{tabular}}
\end{table}

The likelihood ratio (LR) statistics in Table \ref{rv}
indicate that the GOLLGR distribution is the best model
among the others. The histogram and the plots of the estimated
densities in Figure \ref{ex1}(a), and those of the
empirical and estimated survival functions in Figure \ref{ex1}(b), support the previous conclusion.

\begin{table}[!htb]
\caption{LR statistics for voltage data.}
\label{rv}\centering {\vspace*{0.3cm}
\begin{tabular}{c|c|c|c}
\hline\hline
Model & Hypotheses & LR statistic  & $p$-value \\ \hline\hline
GOLLGR vs OLLGR & $H_{0}:\beta=1$ vs $H_{1}: H_{0}\, \mbox{is false}$ & 23.2 & $< $0.00001 \\
GOLLGR vs EGR & $H_{0}:\alpha=1$ vs $H_{1}: H_{0}\, \mbox{is false}$ & 6.5 & 0.01079 \\
GOLLGR vs GR & $H_{0}:\beta=\alpha=1$ vs $H_{1}: H_{0}\, \mbox{is false}$ & 10.4 & 0.0055 \\ \hline\hline
\end{tabular}%
}
\end{table}

\begin{figure}[!htb]
\begin{center}
\includegraphics[height=8.5cm]{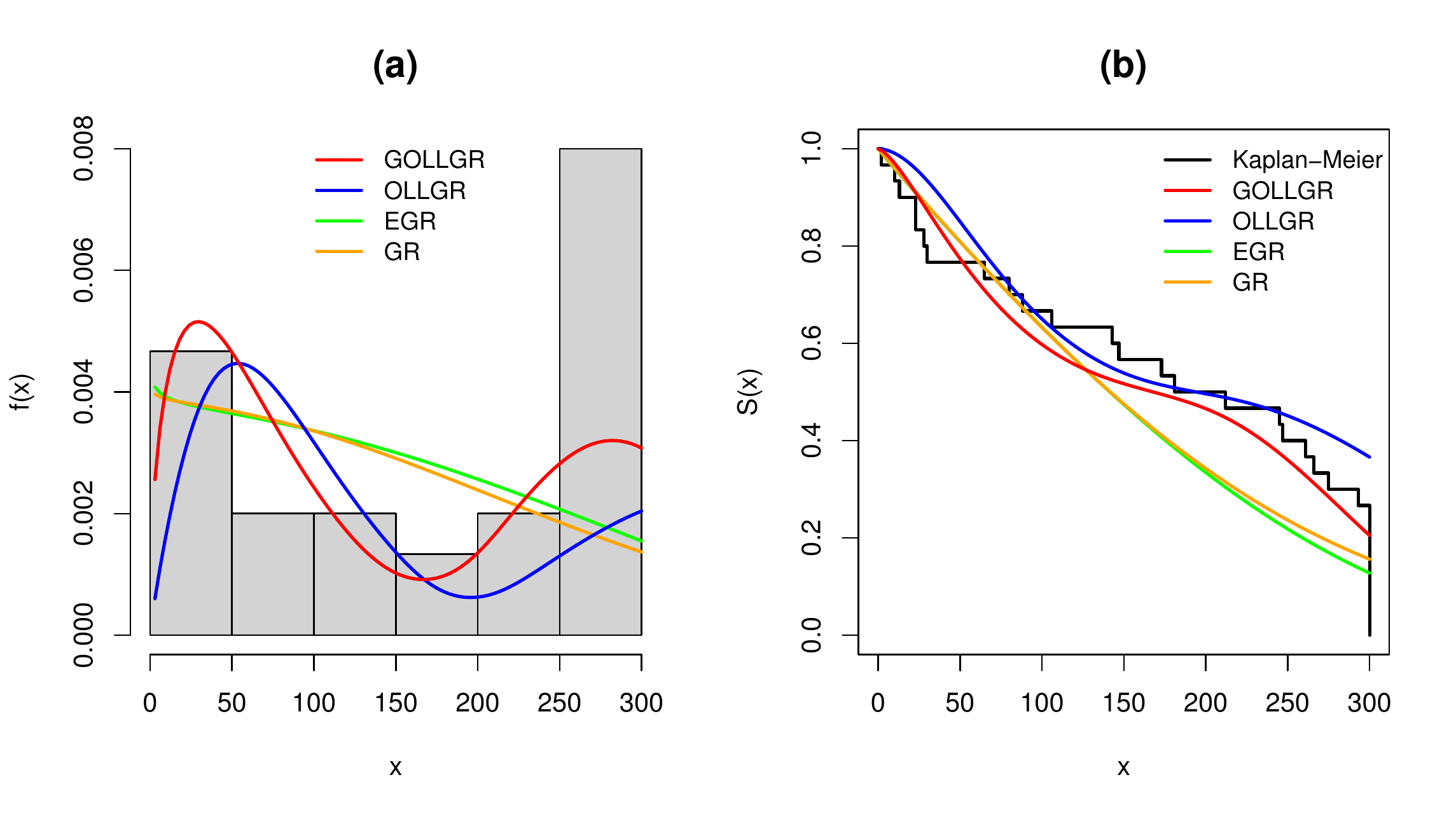}
\end{center}
\caption{Voltage data: (a) Estimated densities. (b) Estimated survival functions and the empirical survival.}
\label{ex1}
\end{figure}

\subsection{Application 2: COVID-19 data}

The second application refers to lifetimes of individuals
diagnosed with COVID-19 (Coronavirus Disease 1999) (Galvão and Roncalli, 2021). Since it was declared an international health emergency, many studies have been conducted to obtain information about the clinical, epidemiological and prognostic aspects of the disease; see, for example, Cordeiro {\it et al.} (2021a),
Cordeiro {\it et al.} (2021b) and  Marinho {\it et al.}
(2021).

In Brazil, the epidemiological data
are disclosed by the Health Information System (available in: \url{https://opendatasus.saude.gov.br/en/dataset/srag-2021-e-2022}.
In this analysis, we work with the \texttt{gamlss} package of {\tt R}.

In this study, 881 patients infected by the virus are considered, confirmed by the RT-PCR test method. The participants consisted of hospitalized patients and outpatients living in the city of Campinas (Brazil) in January and February 2021. The survival consisted of the interval between the first symptoms until the date of death due to COVID-19 (failure). Deaths due to other causes or after the is 73.6\%. Equation (\ref{reg_1}) is considered
with factors associated with the highest risk of death. The results are compared with the OLLGR, EGR and GR sub-regressions.

The following variables were considered for each patient $(i=1,\hdots,881)$:
\begin{itemize}
\itemsep-0.3em

\item $x_i$: time until death due to COVID-19 (in days);
\item $\text{cens}_{i}$: censoring indicator (0 = censored, 1 = observed lifetime);
\item $v_{i1}:$ age (in years);
\item $v_{i2}:$ diabetes mellitus (0= no or not reported, 1= yes).
\end{itemize}

The total number of patients suffering from the comorbidity diabetes was 264 (29.97\%), among whom 104 (39.39\%) died. In turn, of the 617 patients (70.03\%) without the disease or who did not report it, 128 (20.75\%) died. Figure  \ref{surv} presents the Kaplan-Meier survival curve, showing the greater risk of death among patients suffering from diabetes.
\begin{figure}[!htb]
\begin{center}
\includegraphics[height=8cm]{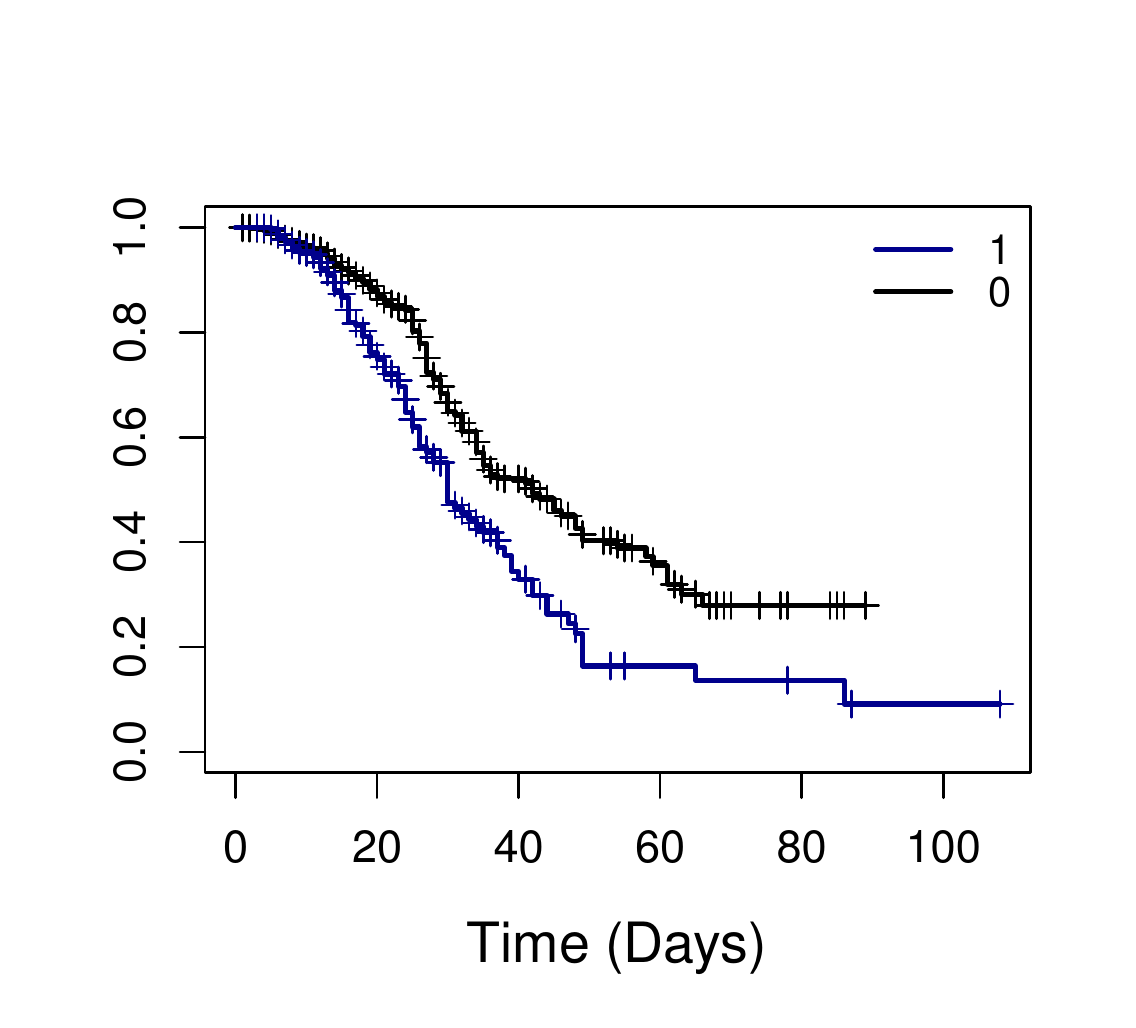}
\end{center}
\caption{Kaplan-Meier survival curve for the variable diabetes mellitus ($1=$ yes, $0=$ no or not informed).}
\label{surv}
\end{figure}

The statistics in Table \ref{aic2} support that the GOLLGR regression can be chosen as the best model. Further,
the LR statistics in Table \ref{lr2} indicate that
the wider regression yields the best fit. Table \ref{mle2} reports the MLEs (SEs in parentheses) from the fitted GOLLGR regression.

\begin{table}[!htb]
\centering
\caption{Findings from the fitted regressions to COVID-19 data.} \label{aic2} \centering  \vspace*{0.3cm}
\begin{tabular}{rrrr}
  \hline\hline
Model & AIC & BIC & CAIC \\
  \hline \hline
GOLLGR & 2222.54 & 2260.78 & 2237.66 \\
  OLLGR  & 2237.73 & 2271.19 & 2262.63 \\
  EGR & 2240.17 & 2273.63 & 2265.07 \\
  GR & 2238.66 & 2267.34 & 2273.34 \\
   \hline\hline
\end{tabular}
\end{table}

\begin{table}[!htb]
\caption{LR statistics for COVID-19 data.}
\label{lr2}\centering {\vspace*{0.3cm}
\begin{tabular}{c|c|c|c}
\hline\hline
Model & Hypotheses & LR statistic & $p$-value \\ \hline\hline
GOLLGR vs OLLGR & $H_{0}:\beta=1$ vs $H_{1}: H_{0}\, \mbox{is false}$ & 17.1915 & $<$0.00001 \\
GOLLGR vs EGR & $H_{0}:\alpha=1$ vs $H_{1}: H_{0}\, \mbox{is false}$ & 19.6298 & $<$0.00001 \\
GOLLGR vs GR & $H_{0}:\beta=\alpha=1$ vs $H_{1}: H_{0}\, \mbox{is false}$ & 20.1202 & $<$0.00001 \\ \hline\hline
\end{tabular}%
}
\end{table}

\begin{table}[!htb]
\centering
\caption{Results from the fitted GOLLGR regressions to COVID-19 data.} \vspace*{0.3cm}
\label{mle2}
\begin{tabular}{rrrrr}
  \hline \hline
 & MLE & SE &   $p$-value \\
\hline \hline
$\lambda_{10}$ & -0.3599 & 0.0289 &  $<$0.0001 \\
$\lambda_{11}$& -0.0028 & 0.0006 & $<$0.0001 \\
  $\lambda_{12}$& 0.0468 & 0.0359 & 0.1920 \\
  $\lambda_{20}$& -10.1148 & 0.1176 & $<$0.0001 \\
  $\lambda_{21}$ & 0.0480 & 0.0022 & $<$0.0001 \\
  $\lambda_{22}$& 0.3331 & 0.0774 & $<$0.0001 \\
  log($\alpha$) & -0.9748 & 0.0130 &  \\
  log($\beta$) & 2.0127 & 0.0130 &   \\
   \hline \hline
\end{tabular}
\end{table}

Figure \ref{res} provides the graphs of the quantile residuals (qrs) (\ref{resq}). The residual index plot (Figure \ref{res}a) reveals that the qrs have a random behavior and that only four observations are outside the $[-3,3]$ range. The normal
probability plot for the qrs (Figure \ref{res}b) indicates
that the residuals follow approximately a normal distribution, which support the fitted regression. Thus, there is no
evidence against the GOLLGR regression assumptions.

\begin{figure}[!htb]
\begin{center}
\includegraphics[width=6cm,height=6cm]{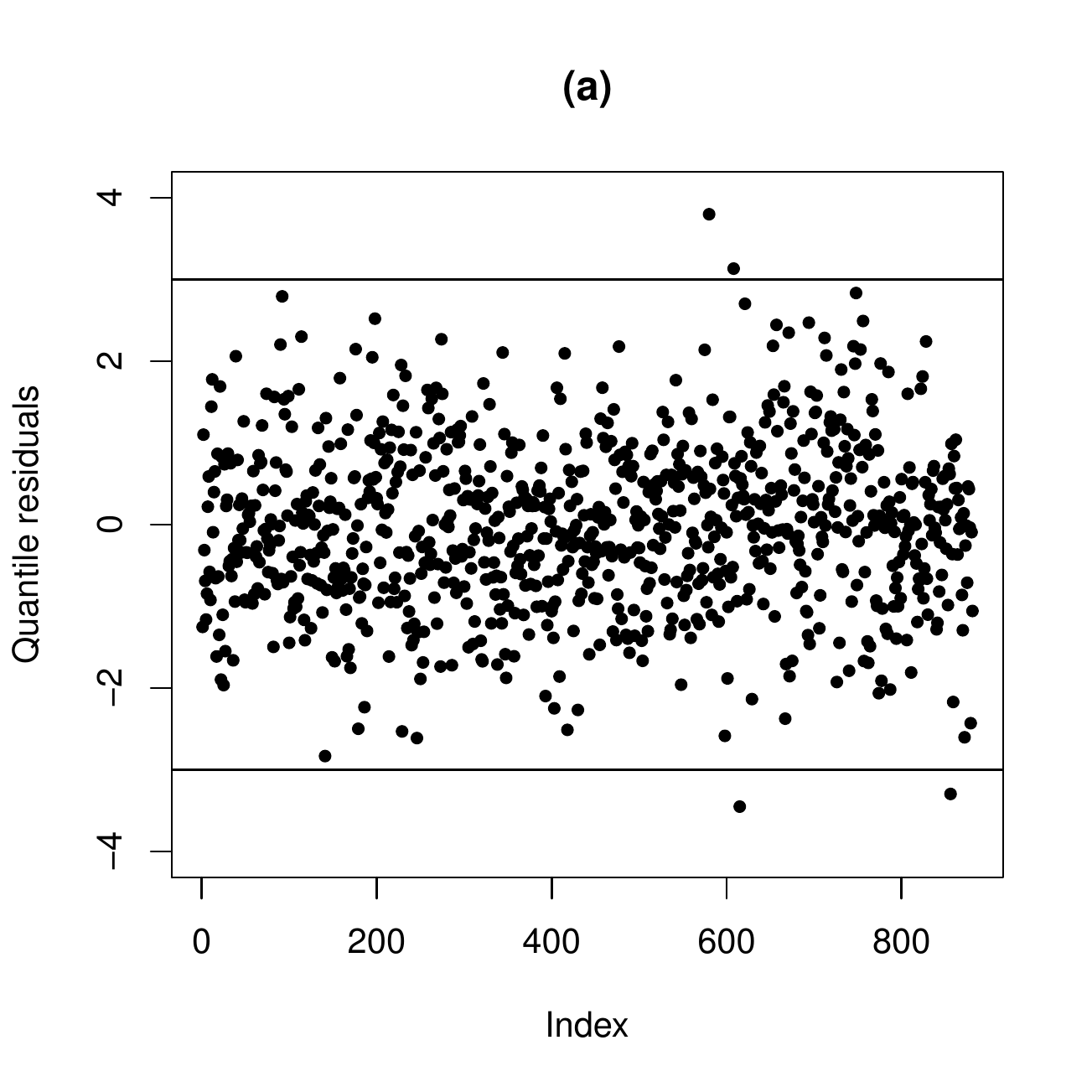} ~~%
\includegraphics[width=6cm,height=6cm]{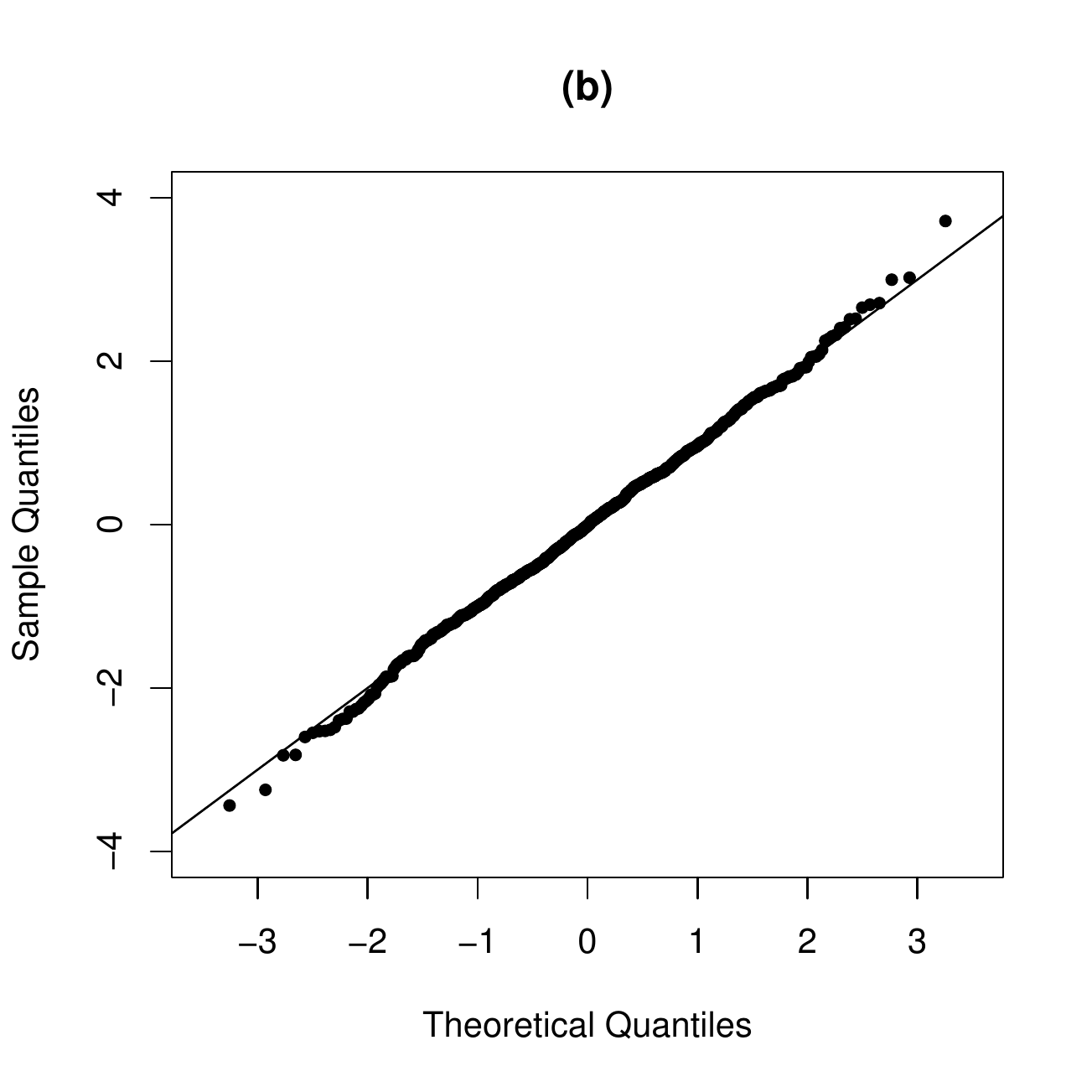}
\end{center}
\caption{COVID-19 data: (a) Index plot of the qrs. (b) Normal probability plot of the qrs.}
\label{res}
\end{figure}

Some interpretations are in order. Table \ref{mle2} shows that the covariable age is significant, meaning that older individuals tend to have a progressively shorter period until death due to this
coronavirus. It is noted a significant difference between individuals with and without diabetes mellitus in relation to
the time until death by COVID-19.

\section{Conclusions}\label{sec:conclusoes}

There is a clear need for extended well-known
distributions and their successful applications in several
areas. The Rayleigh distribution plays a crucial role in modelling and analyzing lifetime data, and several extensions of this
distribution have been published in recent years.
We constructed a new regression based on the four-parameter extended Rayleigh distribution, and showed its utility in the analysis of lifetime data.


\begin{thebibliography}{20}


\bibitem{caohoa16} Cordeiro, G.M., Alizadeh, M., Ozel, G., Hosseini, B., Ortega, E.M.M. and Altun, E. (2017)
The generalized odd log-logistic family of distributions:
properties, regression models and applications.
{\it Journal of Statistical Computation and Simulation},
{\bf 87}, 908-932.

\bibitem{cordeiro2021explaining} Cordeiro, G.M., Figueiredo, D., Silva, L., Ortega, E.M.M., and Prataviera, F. (2021a) Explaining COVID-19 mortality rates in the first wave in Europe.
{\it Model Assisted Statistics and Applications},
{\bf 16}, 211-221.


\bibitem{cordeiro2021estatisticas} Cordeiro, G.M., de Azevedo Cysneiro, F.J., and Cabral, P.C. (2021b) Estatísticas Básicas e Modelagem de Regressão das taxas de mortalidade por COVID-19 nos Estados Brasileiros.
{\it Brazilian Journal of Development},
{\bf 7}2, 117735-117749.


\bibitem{resq} Dunn, P., and Smyth, G. (1996)
Randomized quantile residuals.
 {\it Journal of Computational and Graphical Statistics},
 {\bf 5}, 236–44.


\bibitem{galvao2021fatores}
Galv\~ao, M.H.R., and Roncalli, A.G. (2021) Factors associated with increased risk of death from COVID-19: survival analysis based on confirmed cases.
{\it Brazilian Journal of Epidemiology}, {\bf 23}, 1-10.




\bibitem{gl06}
Gleaton, J.U. and  Lynch, J.D. (2006)
Properties of generalized log-logistic families of lifetime
distributions.
{\it Journal of Probability and Statistical Science}, {\bf 4}, 51-64.


\bibitem{gr00} Gradshteyn, I.S. and Ryzhik, I.M. (2000)
{\it Table of integrals, series, and products.}
San Diego, Academic Press.



\bibitem{gg07} Gupta, R.C. and Gupta, R.D. (2007)
 Proportional reversed hazard rate model and its applications.
 {\it Journal of Statistical Planning and Inference},
 {\bf 137}, 3525-3536.

\bibitem{Has19}
Hashimoto, E.M., Silva, G.O., Ortega, E.M.M. and Cordeiro, G.M. (2019). Log-Burr XII gamma-Weibull regression model with andom effects and censored data.
{Journal of Statisyical Theory and Practice}, {\bf 13}, 1-18.


\bibitem{jkb94} Johnson, N.L., Kotz, S. and Balakrishnan, N. (1994)
{\it Continuous univariate distributions.} New York, John Wiley and Sons.



\bibitem{marinho2021covid}  Marinho, P.R.D., Cordeiro, G.M., Coelho, H.F. and Brandão, S.C.S. (2021) Covid-19 in Brazil: A sad scenario.
{\it Cytokine \& growth factor reviews}, {\bf 58}, 51-54.


\bibitem{me98}
 Meeker, W.Q. and Escobar, L.A. (1998)
 {\it Statistical Methods for Reliability Data}. Wiley, New York.





\bibitem{np00}
Natalini, P. and Palumbo, B. (2000)
Inequalities for the incomplete gamma function.
{\it Mathematical Inequalities \& Applications}, {\bf 3}, 69-77.






\bibitem{pocpv18} Prataviera, F.,  Ortega, E.M.M.,
Gauss Cordeiro, G.M., Pescim, R.R and
Verssani, B.A.W. (2018)
A new generalized odd log-logistic flexible Weibull
regression model with applications in repairable systems.
{\it Reliability Engineering \& System Safety},
{\bf 176}: 13-26.



\bibitem{Pra20}
Prataviera, F., Loibel, S.M.C., Grego, K.F., Ortega, E.M.M. and Cordeiro, G.M. (2020). Modelling non-proportional hazard for survival data with different systematic components. {\it Environmental and Ecological Statistics}, {\bf 27}, 467-489.

\bibitem{pbm86} Prudnikov, A.P., Brychkov, Y.A. and
Marichev, O.I. (1986)
{\it Integrals and series.} New York, Gordon and Breach Science Publishers.


\bibitem{Sil20}
Silva, G.O., Cordeiro, G.M. and Ortega, E.M.M. (2020). Surviving and non surviving fraction regression models based on the beta modified Weibull distribution.
{\it Model Assisted Statistics and Applications}, {\bf 15}, 111-126.


\bibitem{Vas21}
Vasconcelos, J.C.S., Cordeiro, G.M., Ortega, E.M.M. and Ribeiro, J.G. (2021). A regression model for extreme events and the presence of bimodality with application in energy generation data. {\it IET Renewable Power Generation}, {\bf 48}, p. 452-461, 2021.


\bibitem{v76a} Vod\v{a}, V.G. (1976)
Inferential procedures on a generalized Rayleigh variate,
I. {\it Applications of Mathematics}, {\bf 21},  395-412.

















\end{thebibliography}
\end{document}